\newtheorem{theorem}{Theorem}
\newtheorem{lemma}{Lemma}
\newtheorem{definition}{Definition}
\begin{document}
\title{\LARGE Survival trees for left-truncated and right-censored data, with application to time-varying covariate data}
\author{Wei Fu \qquad Jeffrey S. Simonoff \\\\ New York University}
\date{\today}
\maketitle

\begin{abstract}
Tree methods (recursive partitioning) are a popular class of nonparametric methods for analyzing data. One extension of the basic tree methodology is the survival tree, which applies recursive partitioning to censored survival data. There are several existing survival tree methods in the literature, which are mainly designed for right-censored data. We propose two new survival trees for left-truncated and right-censored (LTRC) data, which can be seen as a generalization of the traditional survival tree for right-censored data. Further, we show that such trees can be used to analyze survival data with time-varying covariates, essentially building a time-varying covariates survival tree. Implementation of the methods is easy, and simulations and real data analysis results show that the proposed methods work well for LTRC data and survival data with time-varying covariates, respectively.
\end{abstract}

\section{Introduction}
Right-censored data are often studied using a (semi-)parametric model such as the Cox proportional hazards model. However, the parametric assumptions imposed by these models are often either not met or unrealistic in practice. Therefore, more flexible non-parametric models are desired. Survival trees and survival forests are among the most popular non-parametric alternatives to the Cox model. 

Various authors have proposed tree methods for right-censored data. The key feature that distinguishes different tree methods is the splitting criterion. In the literature, most survival tree algorithms employ similarity/dissimilarity measures of the survival profile for splitting. \cite{segal1988regression} extended the regression tree to right-censored data by replacing the conventional splitting criteria with rules based on two sample statistics. Traditionally, splitting criteria are geared to optimizing within-node homogeneity. In contrast, Segal's algorithm rewards splits that result in large between-node separation (note there is no algebraic equivalence between those two approaches in general).

For the proportional hazards model, the hazard function at time $t$ for an individual with covariates $\mathbf{x}$ is 
\[ \lambda(t|\mathbf{x}) = \lambda_0(t)s(\mathbf{x})	 \]
where $s(\mathbf{x}) \geq 0$ and $\lambda_0(t)$ is the baseline hazard. \cite{1992relative} proposed a method to construct a tree representing the relative risk function, $s(\mathbf{x})$. Their algorithm splits the covariate space based on a rule that maximizes the reduction in the one-step deviance realized by the split, which is defined as the difference between the log-likelihood of the saturated model and the maximized log-likelihood. The baseline cumulative hazard function is estimated by the Nelson-Aalen \citep{nelson1972theory,aalen1978nonparametric} estimator. The algorithm is implemented in the \texttt{R} package \texttt{rpart}. 

\cite{Hothorn06} (hereafter HHZ) implemented an unbiased survival tree using the log-rank test as the splitting method. They successfully embedded the survival tree algorithm into a large framework of conditional inference trees, which has the desirable property of selecting the splitting variable in an unbiased way (an unbiased tree has the property that when there is no relationship between the response and any predictors all predictors have the same probability of being the split variable). Specifically, the algorithm transfers the bivariate survival outcome (response) $(T, \delta_i)$ ($\delta_i = 0$ if $T$ is censoring time; otherwise $1$) into a univariate outcome called the log-rank score and proceeds with it as the response variable. It can be shown that splitting based on the univariate log-rank score is equivalent to splitting based on the bivariate survival outcome $(T,\delta_i)$ using the log-rank test. Another distinct feature of the algorithm of HHZ is that it does not prune. It stops splitting when the algorithm determines there is no need to split further. The algorithm is implemented in the \texttt{R} package \texttt{partykit}.

\subsection{Survival trees with left-truncation data and time-varying covariates}
All of these algorithms deal with the most basic setup of survival outcome -- right-censored data with time-independent covariates. However, other types of survival data such as left-truncated and right-censored (LTRC) data and survival data with time-varying covariates arise commonly in practice. According to \cite{klein2003survival}, left-truncation mainly occurs under two situations: when the event time $T$ is the age of the subject and persons are not observed from birth but rather from some other time $V$ corresponding to their entry into the study; and when $T$ is measured from some landmark, but only subjects who experience some intermediate event at time $V$ are included in the study. Ideally, we want a survival tree algorithm that can handle LTRC and time-varying covariates survival data. However, it turns out time-varying covariates are difficult to deal with using tree methods.

Trees recursively partition the sample space by asking the question ``Is $X_j < C$?''. Observations with answer ``yes'' go to one sub-node and those with answer ``no'' go to the other sub-node. If $X_j$ is a time-independent covariate, then the partition is well-defined for every observation for every possible cut point $C$. The situation is different when $X_j(t)$ is a time-varying covariate. For a specific observation $i$, it is possible that $X_{ij}(t) < C$ for $t < t^*$ but $X_{ij}(t) \geq C$ for $t \geq t^*$, so it is not clear to which sub-node observation $i$ should go. 

\cite{bacchetti1995survival} proposed the first method to extend the tree method to handle time-varying covariates. Their algorithm handles time-varying covariates by splitting each observation into several pseudo-subjects based on the split $x(t)\leq C$, where each pseudo-subject represents a non-overlapping time interval and either $x(t)> C$ or $x(t) \leq C$ in the entire interval. For observation $i$, such a procedure splits the observation at time $t^*$ into two pseudo-subjects, one with $X_{j}(t) < C$ (since $t<t^*$) and one with $X_{j}(t) \geq C$ (since $t \geq t^*$). These two pseudo-subjects can then go to separate sub-nodes. The algorithm uses the log-rank test as the splitting criterion. Since some pseudo-subjects are LTRC data by construction (since $t \geq t^*$), the splitting method is based on a log-rank test \citep{mantel1966evaluation} that is adjusted to accommodate LTRC data. 

\cite{bertolet2012tree} proposed partitioning the data based on time-varying Cox models with time-varying indicators $I_{x(t) \leq C}$ as regression variables. This is equivalent to the approach of \cite{bacchetti1995survival} with 
the log-rank test as the splitting criterion. Other proposed methods includes \cite{huang1998piecewise}, \cite{xu2002survival} and \cite{wallace2014time}. Unfortunately, none of these proposed methods are implemented in publicly available software.

There are few survival tree algorithms that have been proposed to specifically handle left-truncated and right-censored (LTRC) data. One exception is the method proposed by \cite{bacchetti1995survival}, which accommodates LTRC data as a middle step in order to split on time-varying covariates. However, like other existing survival tree methods for time-varying covariates, the algorithm is not publicly available. In addition, existing survival tree methods for 
time-varying covariates are not based on established survival tree algorithms. Such inconsistency might make them less likely to be adopted by users, which may help explain why they are not widely used. Rather, in practice, people usually choose to use the Cox proportional hazard model to handle LTRC data and/or survival data with time-varying covariates.

We propose two new survival tree methods for LTRC data. These two tree methods are simple extensions based on two widely-used survival tree methods, and are easy to implement in practice. Through data reformulation, the proposed LTRC survival tree methods can then be used to fit survival data with time-varying covariates, which makes each of the methods as versatile as the Cox model, being applicable to right-censored data, LTRC data and survival data with time-varying covariates.

In Section $2$, we lay out the details of the two proposed tree methods for LTRC data; Section $3$ investigates the properties and performance of the proposed tree methods through simulations; Section $4$ shows the application of the proposed methods on a real data set; Section $5$ gives the details about how to fit time-varying covariates survival trees using the proposed LTRC trees; Section $6$ discusses the properties of the time-varying covariates survival trees; Section $7$ contains two real data applications of time-varying covariates survival trees and Section $8$ summarizes conclusions.

\section{New LTRC trees by extending existing survival tree methods}
In this section, we provide two examples of generalizing existing survival tree methods to handle left-truncated and right-censored (LTRC) data. To be specific, we extend two widely used survival tree algorithms from the \texttt{R} packages \texttt{partykit} and \texttt{rpart}, respectively, which implement the survival tree algorithms of HHZ and \cite{1992relative}, respectively. 

\subsection{Extending the survival tree of HHZ}
\subsubsection{Log-rank score for right-censored data}

The conditional inference tree of HHZ measures the association of $Y$ and a predictor $X_j$ by linear statistics of the form 
\[
T_j(L_n,w) = vec \left(\sum_{i=1}^n w_ig_i(X_{ji})h \left( Y_i,(Y_i,...Y_n)\right)^T\right)\in \mathbb{R}^{p_jq}                            
\]
(equation $3.1$ in HHZ), where $g_j : \mathcal{X}_j \to \mathbb{R}^{p_j} $ is a nonrandom transformation of covariate $X_j$ and $h : \mathcal{Y} \times \mathcal{Y}^n \to \mathbb{R}^q$ is the influence function of the response $Y$. For a univariate numeric response $Y$, the choice of influence function is the identity, i.e.
$h \left( Y_i,(Y_i,...Y_n)\right) = Y_i	$.
For right-censored data, subjects can be represented as a triple $(t_i,\delta_i, \mathbf{x}_i)$, $i=1, 2,..., n$, where $t_i$ is the observed event time or censored time for the $i$th subject, $\delta_i = 1$ if $t_i$ is the event time and  $\delta_i = 0$ if $t_i$ is the censored time and $\mathbf{x}_i$ is the covariate vector for the $i$th subject.  We also assume that censoring is noninformative given $\mathbf{x}_i$. Then, the response variable for the
$i$th subject is $Y_i = (t_i, \delta_i)$. The influence function for such a bivariate response is the so-called log-rank score
\[	h \left( Y_i,(Y_i,...Y_n)\right) = U_i = \delta_i - \sum_{j=1}^{r_i(t)} \frac{\delta_j}{n-r_j(t)+1}\text{,}	\]
where $r_j(t) = \sum_{i=1}^n I_{\{t_i \leq t_j\}}$ is the number of observations who died or were censored before or at time $t_j$ \cite[equation~(13) ]{hothorn2003exact}. The main function of the log-rank score is to assign a univariate value $U_i$ (scalar) to the bivariate response $Y_i = (t_i, \delta_i)$, so the algorithm can then execute in the same way as in the univariate numeric response case. 

The log-rank score was first proposed by \cite{peto1972asymptotically}, who derived general (asymptotically efficient) rank invariant test procedures for detecting differences between two groups of independent observations.
Let $\hat{S}(t)$ denote the empirical survival curve. For a censored observation, for which the true event time (unobserved) is known to lie in an interval over which $\hat{S}(t)$ drops from $a$ to $b$, or for an observed event time $t_i$, where $\hat{S}(t)$ drops from $a$ to $b$, the log-rank score is

\[ U = \frac{a \log (a)-b \log (b)}{a-b} \text{.}	\] 

$U$ is approximately $1 + \log \hat{S}(t)$ for an exact observed event time $t$, and is exactly $\log \hat{S}(t)$ for a censored survival time $t$. Let $\theta_A$ denote the parameters of survival distribution of group A, which consists of $m$ independent observations, and let $\theta_B$ denote the parameters of group B with $N-m$ independent observations. Then the test of $H_0: \theta_A = \theta_B$ versus $H_a: \theta_A \neq \theta_B$ can be constructed from the test statistics $T_A = \sum^m_{i=1} U_i$, which is the sum of scores from group A. Under $H_0$, $T_A$ has the distribution of a sum of $m$ random variables chosen randomly from $U_1,...,U_N$.  

 \cite{peto1972asymptotically} also established that under $H_0$, the null hypothesis that groups have the same distribution, using the statistics $\sum_{i \in j} U_i$ and $(O_j-\hat{E}_j)$ to describe the $j$th group are equivalent, which means that using the log-rank score and log-rank test to compare survival curves of different groups are equivalent, under the condition of independent observations. 

\subsubsection{Log-rank score for LTRC data}

Let the triple $(L_i, R_i, \delta_i)$ denote the $i$th LTRC observation, where $L_i$ is the left-truncation time, $R_i$ is the observed survival time/censored time and $\delta_i$ is the event indicator. The goal is to construct an influence function that can map the triple $(L_i, R_i, \delta_i)$ into a scalar $U_i$ (i.e. a log-rank score for LTRC data), which is equivalent to testing $H_0$ using either $\sum_{i \in j} U_i$ or the log-rank test. 

\cite{pan1998rank} extended the rank invariant tests of \cite{peto1972asymptotically} to left-truncated and interval-censored data. The log-rank score for left-truncated and interval-censored data in \cite{pan1998rank} is
\[	U_i	= \frac{\hat{S}(l_i) \log \hat{S}(l_i) - \hat{S}(r_i) \log \hat{S}(r_i)}{\hat{S}(l_i)-\hat{S}(r_i)}-\log \hat{S}(L_i)\]
Here, $L_i$ is the left-truncation time and $(l_i, r_i)$ is the interval in which the true event time lies. The log-rank score for LTRC data can be derived from this score equation as a special case. 

For LTRC data, if an observation is censored at time $t_i$, then we only know that the true event time lies in the interval $(t_i, \infty)$; if it is observed at time $t_i$, that means it lies in the interval $\lim_{\Delta \to 0}(t_i-\Delta, t_i+\Delta) $. Through simple calculation and the fact that $\hat{S}(\infty) = 0$, it is easy to determine the log-rank score for our LTRC observation $(L_i, R_i, \delta_i)$ as
\begin{equation}
	U_i = 1 + \log \hat{S}(R_i) - \log \hat{S}(L_i)	\hspace{0.1in} \text{if} \hspace{0.1in} \delta_i=1  \end{equation}
and 
\begin{equation}
U_i = \log \hat{S}(R_i)	- \log \hat{S}(L_i)\hspace{0.1in} \text{if} \hspace{0.1in} \delta_i=0, 
\end{equation}
where $L_i$ is the left-truncation time and $R_i$ is the event/right-censoring time. Note that $\hat{S}$ is the nonparametric maximum likelihood estimator (NPMLE) of the survival function. In practice, such an estimator can be constructed using the product-limit estimator, i.e. Kaplan-Meier (KM) estimator, by redefining the risk set. Note the interpretation of such a product-limit estimator is now conditional, because only for time $t \geq \tau$, $\tau = \min \{L_i: i = 1,...,n\}$, can the nonparametric estimator be calculated and is consistent \citep{gross1996nonparametric,tsai1987note}. Since in $(1)$ and $(2)$ only the ratio $\hat{S}(L_i)/\hat{S}(R_i)$ matters, whether $\hat{S}$ is a conditional estimator or an unconditional estimator is immaterial.   

We will refer to this extended LTRC tree as LTRCIT (LTRC tree based on Conditional Inference Tree).
\subsection{Extending the survival tree of \cite{1992relative}}

\cite{1992relative} proposed a survival tree algorithm based on the assumption of proportional hazards. Specifically, let $(t,\delta, \mathbf{x})$ denote an observation where $t$ is the observed event/censored time, $\delta$ is the event indication and $\mathbf{x}$ is the vector of covariates. The sample consists of $n$ independent observations $(t_i,\delta_i, \mathbf{x}_i), i=1,2,...,n$. Then, the full likelihood of the proportional hazards model
\[ \lambda(t|\mathbf{x}) = \lambda_0(t)s(\mathbf{x})	 \]
of the sample for a tree $T$ can be expressed as
\[ L = \Pi_{h \in \widetilde{T}} \Pi_{i \in S_h}  \lambda_h(t_i)^{\delta_i}e^{-\Lambda_h(t_i)},\]
where $\widetilde{T}$ is the set of terminal nodes (leaves), $S_h$ is the set of observation labels , $\{i: \mathbf{x}_i \in \chi_h\}$ for observations in the region $\chi_h$ corresponding to node $h$, and $ \lambda_h(t)$ and $\Lambda_h(t_i)$ are the hazard and cumulative hazard function for node $h$, respectively. Assume the proportional hazards model

\[	\lambda_h(t) = \theta_h \lambda_0(t)	\] 
is true, where $\theta_h $ is the nonnegative relative risk of node $h$ and $\lambda_0(t)$ is the baseline hazard. Then the full likelihood can be written as

\[L = \Pi_{h \in \widetilde{T}} \Pi_{i \in S_h}  (\theta_h \lambda_0(t_i))^{\delta_i}e^{-\Lambda_0(t_i) \theta_h}\text{,}\] 
where $\Lambda_0(t_i)$ is the baseline cumulative hazard function. \cite{1992relative} estimate $\theta_h$ by 

\[	\hat{\theta}_h	 = \frac{\sum_{i \in S_h} \delta_i}{\sum_{i \in S_h} \Lambda_0(t_i)}\text{,}\]
where $\Lambda_0$ is estimated using all of the data at the root node by the Nelson-Aalen estimator. This can be seen to be the observed number of events divided by the expected number of events in node $h$ assuming observations in node $h$ are randomly sampled from the root node. The deviance for node $h$ is 

\[ R(h) = 2[L_h(\text{saturated}) - L_h(\hat{\theta}_h)]	\]
where $L_h(\text{saturated}) $ is the log-likelihood of the saturated model and $L_h(\hat{\theta}_h)$ is the maximized log-likelihood when $\Lambda_0(t)$ is known. The splitting criterion is the reduction of the node deviance residual

\[	D_{\text{parent}} - \{ D_{\text{left daughter node}} + D_{\text{right daughter node}}\}	\] 
where $D_h = \sum_{i \in h} d_i $, with the contribution of the $i$th observation being 

\begin{equation}
d_i = 2\left[ \delta_i \log \left(\frac{\delta_i}{\Lambda_0(t_i)\hat{\theta}_h} \right) -\left(\delta_i - \Lambda_0(t_i)\hat{\theta}_h \right) \right] \text{.}
\end{equation}
	
An equivalent approach is based on Poisson regression. Let $\lambda$ be an event rate, $t_i$ be exposure time for observation $i$ and $c_i$ is the observed event count for observation $i$. Then the within node deviance residual for a Poisson regression tree is 

\[	D =  \sum \left[ c_i \log \left(\frac{c_i}{\hat{\lambda} t_i} \right) -\left(c_i - \hat{\lambda} t_i \right) \right]	\]
with $\hat{\lambda} = \frac{\sum c_i}{\sum t_i}$. Comparing this to the node deviance residual of a survival tree, one can easily see that they are equivalent if we replace $c_i$ by $\delta_i$ and  $t_i$ by $\Lambda_0(t_i)$.

This is how the survival tree is fit in \texttt{rpart}. That is, the algorithm first estimates the baseline hazard 
$\Lambda_0(t_i)$ based on the entire training data and then fits a Poisson regression tree by treating $\Lambda_0(t_i)$ as the new $t_i$ and treating $\delta_i$ as the new $c_i$.

\subsubsection{Equivalent Poisson regression tree for LTRC data}

The full log-likelihood for right-censored data $(t_i,\delta_i, \mathbf{x}_i), i=1,2,...,n$ is
\begin{equation}
 \log L  = \sum^n_{i=1} \left[ \delta_i \log \lambda(t_i) - \Lambda(t_i) \right]  = \sum^n_{i=1}  \left[ \delta_i \log \lambda(t_i) - \int^{t_i}_0 \lambda(\mu)d \mu \right],
\end{equation}
while the log-likelihood for left-truncated and right-censored (LTRC) data $(L_i, R_i, \delta_i, \mathbf{x}_i), i=1,2,...,n$ is

\begin{equation}
\log L  = \sum^n_{i=1} \left[ \delta_i \log \lambda(R_i) - \left(\Lambda(R_i) - \Lambda\left(L_i\right)\right) \right]  = \sum^n_{i=1}  \left[ \delta_i \log \lambda(R_i) - \int^{R_i}_{L_i} \lambda(\mu)d \mu \right] \text{,}
\end{equation}
where $L_i$ is the left-truncation time and $R_i$ is the right-censored time for observation $i$. Since the only difference is the replacement of $\Lambda(t_i)$ with $\Lambda(R_i) - \Lambda(L_i)$, replacing $\Lambda_0(t_i)$ in \cite{1992relative} with $\Lambda_0(R_i) - \Lambda_0(L_i)$ effectively extends the model to LTRC data.

Three steps are needed to implement the method. First, estimation of the cumulative function $\Lambda_0(t)$ is still based on all of the LTRC data. Note that observation $(L_i, R_i, \delta_i, \mathbf{x}_i)$ is only counted as in the risk set for time $t$ when $L_i \leq t \leq R_i$. Next, the ``exposure time'' for observation $i$ is computed using $\hat{\Lambda}_0(R_i) - \hat{\Lambda}_0(L_i)$ based on the estimated cumulative function $\hat{\Lambda}_0(t)$. Finally, we fit a Poisson regression tree by treating the calculated $\hat{\Lambda}_0(R_i) - \hat{\Lambda}_0(L_i)$ as the new $t_i$ and treating $\delta_i$ as the new $c_i$. The extended LTRC tree is called LTRCART (LTRC tree based on CART framework).

\section{Properties of the two LTRC trees}

The survival tree of HHZ is unbiased in terms of selecting the splitting variable, which means it selects each covariate with equal probability of splitting under the condition of independence between response and covariates. This suggests that the extended LTRC tree based on it (i.e. LTRCIT) is also unbiased, while that based on \cite{1992relative} (i.e. LTRCART) is not. Simulation results show that this is indeed the case. The details of the unbiasedness test can be found in the supplemental material, available at \url{http://people.stern.nyu.edu/jsimonof/survivaltree}.

\subsection{Recovering the correct tree structure \label{Sec 3.1}}
We first explore the proposed trees' ability to recover the correct underlying tree structure of the data. The simulation setup is as follows.
 
The left truncation time $L$ is generated as independent uniform $[0, U]$, with $U$ taking on values from $\{1,2,3\}$ to represent different truncation rates. There are $6$ covariates $X_1,...,X_6$, where $X_1, X_4$ randomly take values from the set $\{1, 2, 3, 4, 5\}$, $X_2, X_5$ are binary$\{1, 2\}$ and $X_3,X_6$ are $U[0, 2]$. Only the first three covariates $X_1, X_2, X_3$ determine the distribution of survival (event) time $T$. The survival time $T$ has distribution according to the values of $X_1, X_2, X_3$ by the structure given in Figure \ref{fig1}
\begin{figure}[H]
  \begin{center}
   \captionsetup{justification=centering}
      \includegraphics[width=6 in,height =3 in]{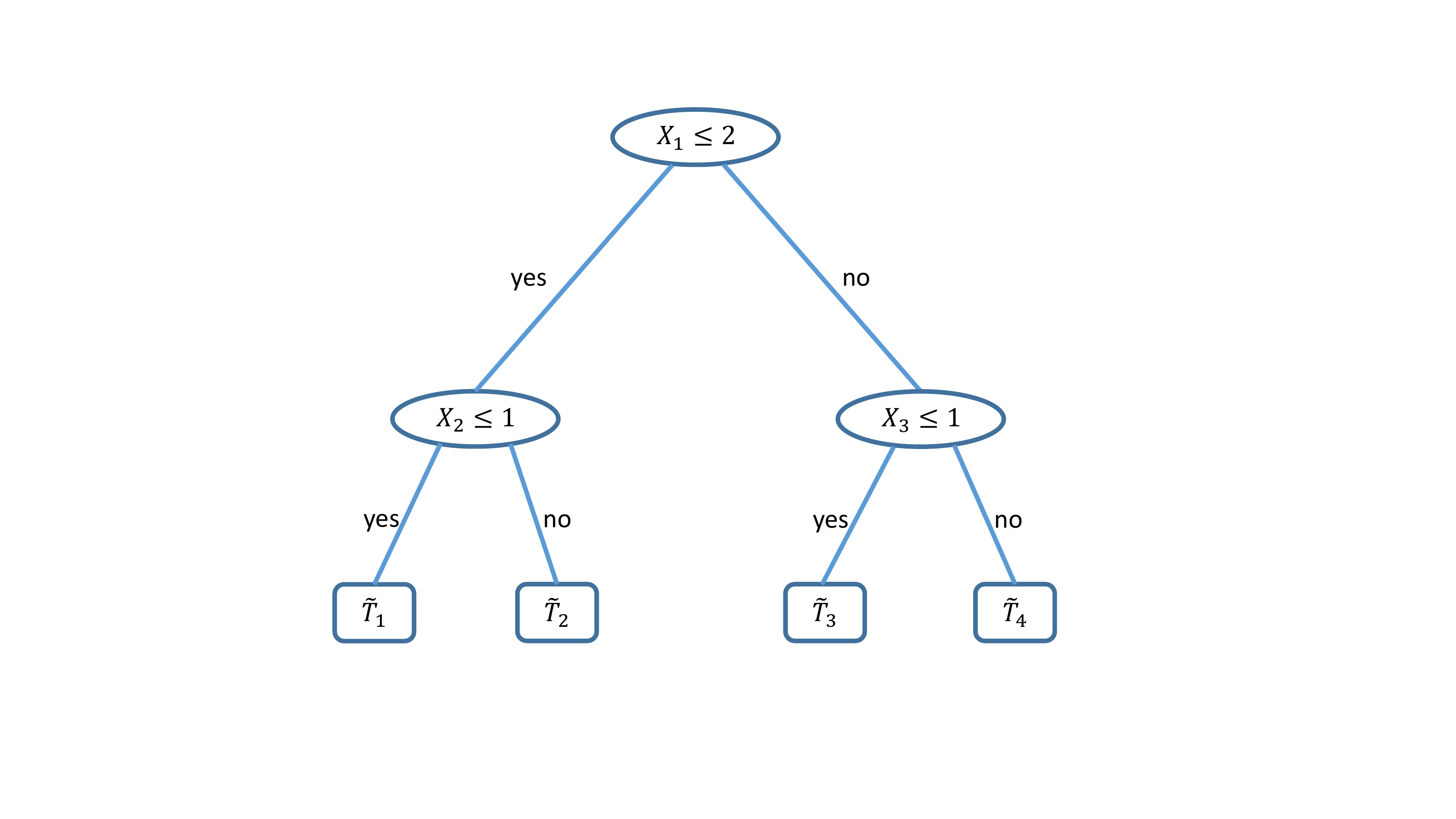}
   \end{center}
     \caption{\label{fig1} Tree structure used in simulations of Section \ref{Sec 3.1}} 
 \end{figure} 
 \noindent
If the generated $T < L$, i.e. the survival time is less than the left-truncation time, then this observation is discarded. Otherwise, the observation is retained, with censoring time $C = L + D$, where $D$ has an exponential distribution. If $C < T$, then this observation is censored ($\delta = 0$), otherwise the survival time $T$ is observed ($\delta=1$). Note here $D$ and $L$ are both independently generated from $T$ and from each other. \\

\noindent
We generate $T$ from $5$ different distributions:
\begin{itemize}
	\item Exponential with four different values of $\lambda$ from $\{0.1, 0.23, 0.4, 0.9\}$.
	\item Weibull distribution with shape parameter $\alpha = 0.9$, which corresponds to decreasing hazard with time. The scale parameter $\beta$ takes the values $\{7.0, 3.0, 2.5, 1.0\}$.
	\item Weibull distribution with shape parameter $\alpha = 3$, which corresponds to increasing hazard with time. The scale parameter $\beta$ takes the values $\{2.0, 4.3, 6.2, 10.0\}$.
	\item Log-normal distribution with location parameter $\mu$ and scale parameter $\sigma$ with $4$ different pairs $(\mu, \sigma) = \{(2.0, 0.3), (1.7, 0.2), (1.3, 0.3), (0.5, 0.5)\}$.
	\item Bathtub-shaped hazard model \citep{hjorth1980reliability}. The survival function is given by
	\[ S(t;a,b,c) = \frac{\exp(-\frac{1}{2}at^2)}{(1+ct)^{b/c}} \]
	with $b=1$, $c=5$ and $a$ set to take value $\{0.01, 0.05, 0.1, 0.7\}$.
\end{itemize}
Note that for the exponential distribution and two Weibull distributions, the proportional hazards assumption is satisfied for the four groups. Each of the five distributions has two possible censoring rates: light censoring with about $20\%$ observations being censored and heavy censoring with about $50\%$ observations being censored. Coupled with three different truncation rates, each distribution has six different combinations of censoring rate and truncation rate. Parameters in each distribution are selected to assure that the pattern of mean values of $T$ across nodes is similar across different distributions.  Figure \ref{density} shows the density $f_T$ in each leaf for different distributions.   

\begin{figure}[t]
  \begin{center}
   \captionsetup{justification=centering}
      \includegraphics[width=6in,height =4in]{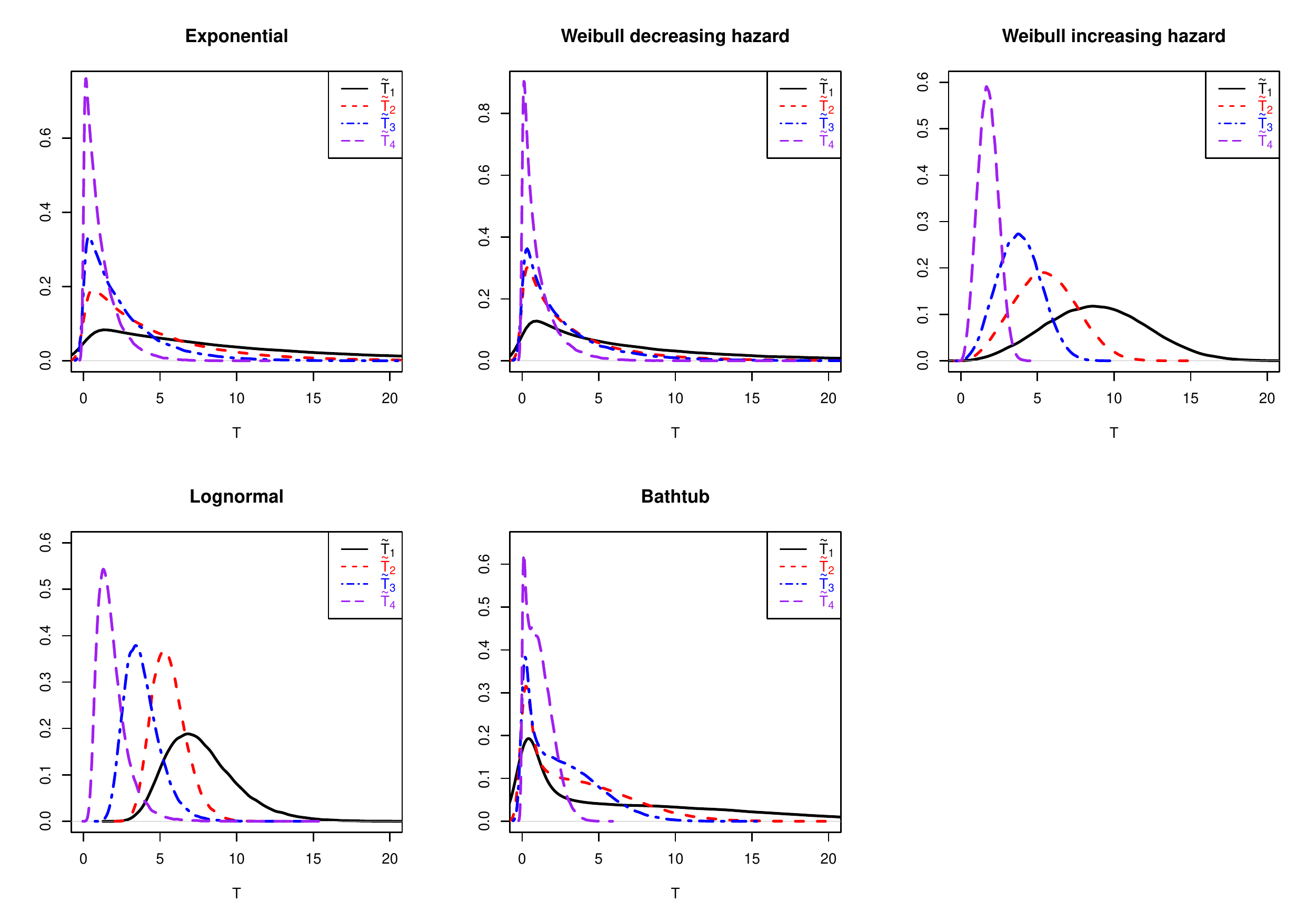}
   \end{center}
   \caption{\label{density} Density plots of survival time $T$ for each terminal node by distribution type.} 
 \end{figure}

\begin{table}[t]
\center
\caption{\label{table2} Tree structure recovery rate in percentages.} 
\scalebox{0.77}{%
\hspace*{-2cm}
\begin{tabular}{ ccccccccccccccccc}
\hline
\multicolumn{2}{c}{N=100} && \multicolumn{2}{c}{Exponential} &&  \multicolumn{2}{c}{Weibull-I}  &&  \multicolumn{2}{c}{Weibull-D}  &&  \multicolumn{2}{c}{Lognormal}  &&  \multicolumn{2}{c}{Bathtub} \\  \cline{4-5}  \cline{7-8} 
\cline{10-11} \cline{13-14} \cline{16-17}
 Censor.rate & Truncation && \footnotesize{LTRCIT} & \footnotesize{LTRCART} &&  \footnotesize{LTRCIT} &\footnotesize{LTRCART} && \footnotesize{LTRCIT} & \footnotesize{LTRCART} &&  \footnotesize{LTRCIT} & \footnotesize{LTRCART}  && \footnotesize{LTRCIT} & \footnotesize{LTRCART}\\ 
\hline
 Light & $U[0, 1]$ &&  $6.0$&$6.8$  &&  $58.9$&$32.6$   &&   $2.2$&$3.9$  &&   $47.6$&$15.7$ &&   $8.2$&$5.3$ \\ 
 Heavy & $U[0, 1]$ &&  $2.1$&$2.6$  &&  $17.6$&$9.5$    &&   $0.4$&$1.6$   &&   $10.6$&$2.9$  &&   $1.7$&$1.8$ \\ 
 Light & $U[0, 2]$ &&  $4.1$&$6.7$  &&  $61.7$&$35.6$   &&   $1.8$&$4.6$  &&   $45.0$&$20.2$ &&   $12.1$&$10.7$ \\ 
 Heavy & $U[0, 2]$ &&  $1.0$&$2.8$  &&  $23.2$&$12.1$   &&   $0.6$&$1.0$   &&   $14.8$&$5.7$  &&   $2.4$&$3.3$ \\ 
 Light & $U[0, 3]$ &&  $2.9$&$4.7$  &&  $57.8$&$41.8$   &&   $1.2$&$3.8$  &&   $36.3$&$23.4$ &&   $13.4$&$12.0$ \\ 
 Heavy & $U[0, 3]$ &&  $1.2$&$1.9$  &&  $27.2$&$14.5$   &&   $0.4$&$0.9$    &&   $13.8$&$7.3$  &&   $3.2$&$3.3$ \\  \hline
 
\multicolumn{2}{c}{N=300} && \multicolumn{2}{c}{Exponential} &&  \multicolumn{2}{c}{Weibull-I}  &&  \multicolumn{2}{c}{Weibull-D}  &&  \multicolumn{2}{c}{Lognormal}  &&  \multicolumn{2}{c}{Bathtub} \\  \cline{4-5}  \cline{7-8} 
\cline{10-11} \cline{13-14} \cline{16-17}
 Censor.rate & Truncation && \footnotesize{LTRCIT} & \footnotesize{LTRCART} &&  \footnotesize{LTRCIT} &\footnotesize{LTRCART} && \footnotesize{LTRCIT} & \footnotesize{LTRCART} &&  \footnotesize{LTRCIT} & \footnotesize{LTRCART}  && \footnotesize{LTRCIT} & \footnotesize{LTRCART}\\ 
\hline
 Light & $U[0, 1]$ &&  $69.1$&$47.6$  &&  $83.8$&$85.1$   &&   $62.2$&$40.7$   &&   $84.8$&$49.7$  &&   $61.1$&$26.1$ \\ 
 Heavy & $U[0, 1]$ &&  $41.4$&$18.8$  &&  $71.0$&$37.9$   &&   $35.4$&$18.4$   &&   $66.5$&$10.7$  &&   $15.2$&$5.2$ \\ 
 Light & $U[0, 2]$ &&  $65.0$&$50.3$  &&  $84.2$&$85.8$   &&   $57.8$&$44.7$   &&   $84.4$&$60.5$  &&   $75.3$&$48.7$ \\ 
 Heavy & $U[0, 2]$ &&  $38.7$&$22.9$  &&  $76.7$&$47.2$   &&   $32.4$&$18.4$   &&   $75.8$&$19.8$  &&   $28.4$&$9.5$ \\ 
 Light & $U[0, 3]$ &&  $57.0$&$48.0$  &&  $82.3$&$85.5$   &&   $51.4$&$43.2$   &&   $86.0$&$66.7$  &&   $81.6$&$62.2$ \\ 
 Heavy & $U[0, 3]$ &&  $34.3$&$20.5$  &&  $80.7$&$59.9$   &&   $28.0$&$18.0$   &&   $81.9$&$31.6$  &&   $46.1$&$18.7$ \\  \hline
 
\multicolumn{2}{c}{N=500} && \multicolumn{2}{c}{Exponential} &&  \multicolumn{2}{c}{Weibull-I}  &&  \multicolumn{2}{c}{Weibull-D}  &&  \multicolumn{2}{c}{Lognormal}  &&  \multicolumn{2}{c}{Bathtub} \\  \cline{4-5}  \cline{7-8} 
\cline{10-11} \cline{13-14} \cline{16-17}
 Censor.rate & Truncation && \footnotesize{LTRCIT} & \footnotesize{LTRCART} &&  \footnotesize{LTRCIT} &\footnotesize{LTRCART} && \footnotesize{LTRCIT} & \footnotesize{LTRCART} &&  \footnotesize{LTRCIT} & \footnotesize{LTRCART}  && \footnotesize{LTRCIT} & \footnotesize{LTRCART}\\ 
\hline
 Light & $U[0, 1]$ &&  $81.5$&$77.8$  &&  $84.6$&$91.6$  &&  $82.2$&$76.0$   &&   $85.3$&$74.9$  &&   $78.8$&$53.3$ \\ 
 Heavy & $U[0, 1]$ &&  $72.2$&$45.7$  &&  $84.3$&$69.8$  &&  $68.0$&$40.7$   &&   $82.6$&$27.2$  &&   $29.8$&$9.5$ \\ 
 Light & $U[0, 2]$ &&  $80.7$&$82.9$  &&  $86.3$&$91.6$  &&  $80.9$&$78.6$   &&   $85.9$&$79.3$  &&   $82.7$&$76.9$ \\ 
 Heavy & $U[0, 2]$ &&  $70.8$&$50.8$  &&  $84.8$&$77.9$  &&  $67.1$&$44.8$   &&   $83.9$&$39.5$  &&   $54.4$&$23.4$ \\ 
 Light & $U[0, 3]$ &&  $78.5$&$78.4$  &&  $82.7$&$91.8$  &&  $79.2$&$75.2$   &&   $84.6$&$83.3$  &&   $84.1$&$86.1$ \\ 
 Heavy & $U[0, 3]$ &&  $69.4$&$52.2$  &&  $81.5$&$86.9$  &&  $65.3$&$44.5$   &&   $85.4$&$59.0$  &&   $68.8$&$42.0$ \\  \hline
\end{tabular}
\hspace*{-2cm}
}
\end{table}

We run 1,000 simulation trials for each setting to see how well the two proposed LTRC trees recover the correct tree structure. Table \ref{table2} gives the percentage of the time the correct tree structure is found for each setting. 

The LTRCART tree requires a pruning strategy. We prune the fitted Poisson tree back by selecting the subtree with smallest ten-fold cross-validation error. The usual $1$-SE rule (select the smallest tree with cross-validation error less than one standard error above the minimized value) appears to be too pessimistic for this splitting criterion. A performance comparison of the default $0$-SE rule vs. the usual $1$-SE for LTRCART in terms of recovering the correct tree structure can be found in the supplemental material, from which one can see that the default choice clearly outperforms the usual $1$-SE rule.

To understand the results, it might be helpful to look at Figure \ref{density}. From the density plot, it seems that the easiest distributions to distinguish among the four groups (terminal nodes) are the Weibull with increasing hazard and the lognormal distribution, as we can see those groups are distinguishable even when left-truncation and right-censoring are considered. This is exactly what the results in Table \ref{table2} show, as both LTRC tree algorithms perform better for these distributions, especially when the sample size is small.

The censoring rate has an obvious impact on the trees as we can see that heavy censoring reduces the recovery rate in all cases. The impact of heavy censoring is large when the sample size is smaller, presumably because larger samples bring stability to the trees, which partially offsets the effects caused by information lost due to censoring. From Table \ref{table2}, we note that the left-truncation distribution has minimal impact on the results. Since all of the  distributions are heavy tailed, their ``tail'' distributions ($f_T$ with large $T$) are more decisive than the ``head'' distributions ($f_T$ with small $T$) in terms of uniqueness of each distribution. In other words, the important characteristics of each distribution are more represented in the ``tail'' than in the ``head.'' Since left-truncation causes information loss in the ``head'' while the right-censoring causes information loss in the ``tail,'' censoring has more impact than left-truncation in Table \ref{table2}.

As a general phenomenon, increasing the sample size helps both LTRC trees to successfully recover the correct tree structure, but LTRCART seems to benefit more from a large sample size than does LTRCIT. In other words, LTRCART is more sensitive to sample size than LTRCIT. LTRCIT generally outperforms LTRCART, even in the proportional hazards cases, such as the Exponential and the two Weibull distributions. It also seems to be less sensitive to high censoring compared to LTRCART. 

\subsection{Prediction performance}

\subsubsection{Performance measure}
To compare different methods, we first need a performance measure. Unlike in a classification or regression problem where misclassification rate or MSE is the obvious choice, there is no single obvious way of measuring a model's prediction power for survival data. 
   
The most popular measure of error in the survival context is the Brier score, along with its integrated version introduced by \cite{graf1999assessment}. 
For right-censored survival data, let $(X_i, Y_i,\delta_i)$ denote the observed information of the $i$th observation, where $X_i$ is a vector of covariates' value, $Y_i$ is the observed survival time and $\delta_i$ is the event indicator. The Brier score at a fixed time $t$ is defined as
   \[	BS(t) = \frac{1}{n} \sum^n_{i=1} \left( I(Y_i>t) - \hat{S}(t|X_i) \right)^2	\]
where $\hat{S}(t|X_i)$ is the predicted survival rate conditional on $X_i$ given by the model. 
For right-censored survival data $(X_i, Y_i,\delta_i)$, $i=1,2,...,n$, the definition of the Brier score at time $t$ is  
\[
BS(t) = \frac{1}{n} \sum^n_{i=1} \left[ \hat{S}(t|X_i)^2 I(Y_i \leq t \land \delta_i = 1)\hat{G}_{Y_i}^{-1} + \left(1 - \hat{S}(t|X_i) \right)^2 I(Y_i > t)\hat{G}_{t}^{-1} \right].	
\]
Note that since evaluation of performance on the test set is based on the actual uncensored survival times, $\delta_i = 1$ for all observations. The integrated Brier score is given by
\begin{equation}
IBS = \frac{1}{\max{(Y_i)}} \int^{\max{(Y_i)}}_0 BS(Y)dY 
\end{equation} 
and is usually preferred to the time-dependent Brier score since it gives a summary of the prediction error over the entire study period.  

\subsubsection{Simulation setup}
We use three simulation setups to test the prediction performance of the LTRC trees. Besides the two proposed tree methods, we also include the Cox proportional hazards model in the simulations for comparison. To see how left-truncation matters, we also include the versions that ignore left-truncation of the two tree methods and the Cox model. The three survival families are as follows:

\begin{enumerate}[label=(\roman*)]
	\item Tree structured data as in Section \ref{Sec 3.1}
	\item  $\vartheta = -X_1-X_2$
	\item $\vartheta = -\left[cos\left(\left(X_1+X_2\right)\cdot \pi\right)+\sqrt{X_1+X_2}\right]$
\end{enumerate}
where $\vartheta$ is a location parameter whose value is determined by covariates $X_1$ and $X_2$. In the first setup, data are generated according to the tree structure described in Section $3.2$, so the LTRC trees should perform well. The second and third setups are similar to those in \cite{hothorn2004bagging}. Six independent covariates $X_1,...,X_6$ serve as predictor variables, with $X_2, X_3,X_6$ binary$\{0,1\}$ and $X_1, X_4, X_5$ uniform$[0,1]$. The survival time $T_i$ depends on $\vartheta$ with three different distributions:

\begin{itemize}
	\item Exponential with parameter $\lambda = e^{\vartheta}$ 
	\item Weibull with increasing hazard, scale parameter $\lambda = 10e^{\vartheta}$ and shape parameter $k=2$
	\item Weibull with decreasing hazard, scale parameter $\lambda = 5e^{\vartheta}$ and shape parameter $k=0.5$
\end{itemize}

Note that for the Exponential distribution, the hazard function is

\[	h(t | \mathbf{x}) = \lambda	= e^{\vartheta},\]
while for the Weibull distribution, the hazard function is

 \[	h(t | \mathbf{x}) = kt^{k-1} e^{-k \log \lambda}= kt^{k-1} e^{-k (\vartheta + \log 5)} \hspace{0.2in} \text{or} \hspace{0.2in}  kt^{k-1} e^{-k (\vartheta+\log 10)}.\]
Therefore, in the second setup where $\vartheta = -X_1-X_2$, both satisfy the proportional hazards assumption, with the log hazard linearly dependent on covariates 
 \[	 h(t | \mathbf{x}) = h_0(t)e^{\mathbf{x} \beta}.\]
Thus, the Cox PH model should perform best in this setup.

The third setup is similar to the second except that $\vartheta$ in this setup has a more complex nonlinear structure in terms of covariates ($-\left[cos\left(\left(X_1+X_2\right)\cdot \pi\right)+\sqrt{X_1+X_2}\right]$), which makes the distributions of $T_i$ satisfy neither the Cox PH model nor the tree structure. Such a setup is to test how robust the LTRC trees and Cox PH model are in a real world application where survival time might have a complex structure. 

In all setups, the left truncation time $L$ is generated as independent uniform $[0, U]$, with $U$ taking three different values from $\{1,2,3\}$ to represent different truncation rates. The survival time $T$ is generated as detailed above in each setup. If the generated $T < L$, i.e. survival time is less than left-truncation time, then this observation is discarded. Otherwise, the observation is retained, with censoring time $C = L + D$, where $D$ has an exponential distribution. If $C < T$, then this observation is censored ($\delta = 0$), otherwise the survival time $T$ is observed ($\delta=1$). Note that $D$ and $L$ are both independently generated from $T$ and from each other. Two possible censoring rates, light censoring with about $20\%$ observations being censored and heavy censoring with about $50\%$ observations being censored, are considered in each setting.

The survival time $T_i$ in the test set is also generated according to this process, except that no truncation time or right-censoring time is used, i.e. the survival time $T_i$ is never left-truncated or right-censored. The test set is set to have the same sample size as the training set. We pick the size $N$ from $\{100, 300, 500\}$ to see how the sample size affects performance.

Figures \ref{fig:Boxplot1}-\ref{fig:Boxplot6} give side-by-side IBS boxplots for all three settings with sample size $N=300$. In these figures, there are $3$ rows of mini-plots. The first row represents the left-truncation time $L$ with distribution $U(0, 1)$, the second row represents the truncation time $L$ with distribution $U(0, 2)$ and the third row represents truncation time $L$ with distribution $U(0, 3)$. The methods are numbered as follows:
\begin{enumerate}
	\item LTRCIT
	\item Conditional inference survival tree (HHZ) ignoring left truncation 
	\item LTRCART
	\item Relative risk survival tree \citep{1992relative} ignoring left truncation
	\item Cox proportional hazards model
	\item Cox model ignoring left truncation 
\end{enumerate}
The complete set of IBS boxplots, including cases with sample size $N=100$ and $N=500$ can be found in the supplemental material. 

\begin{sidewaysfigure}
\includegraphics[width=\textwidth]{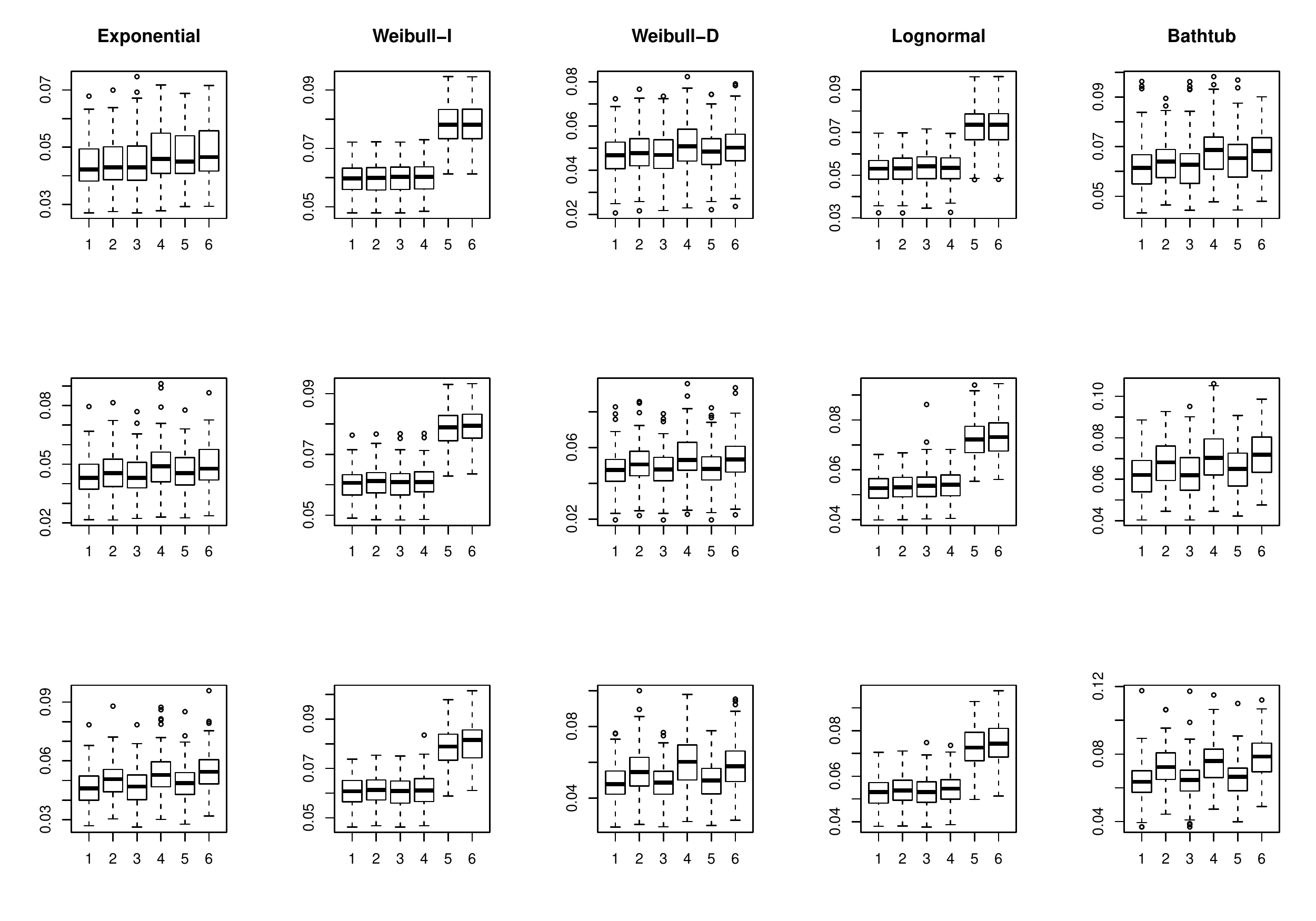}
\caption{\label{fig:Boxplot1}Setting $1$: IBS boxplots with light censoring and $N=300$. Methods are numbered as 
$1$-LTRCIT, $2$-Conditional inference survival tree, $3$-LTRCART, $4$-Relative risk survival tree, $5$-Cox model,
$6$-Cox model ignoring left-truncation. First to third row corresponding to left-truncation time $L\sim U(0,1)$, $L\sim U(0,2)$ and $L\sim U(0,3)$, respectively.}
\end{sidewaysfigure}

\begin{sidewaysfigure}
\includegraphics[width=\textwidth]{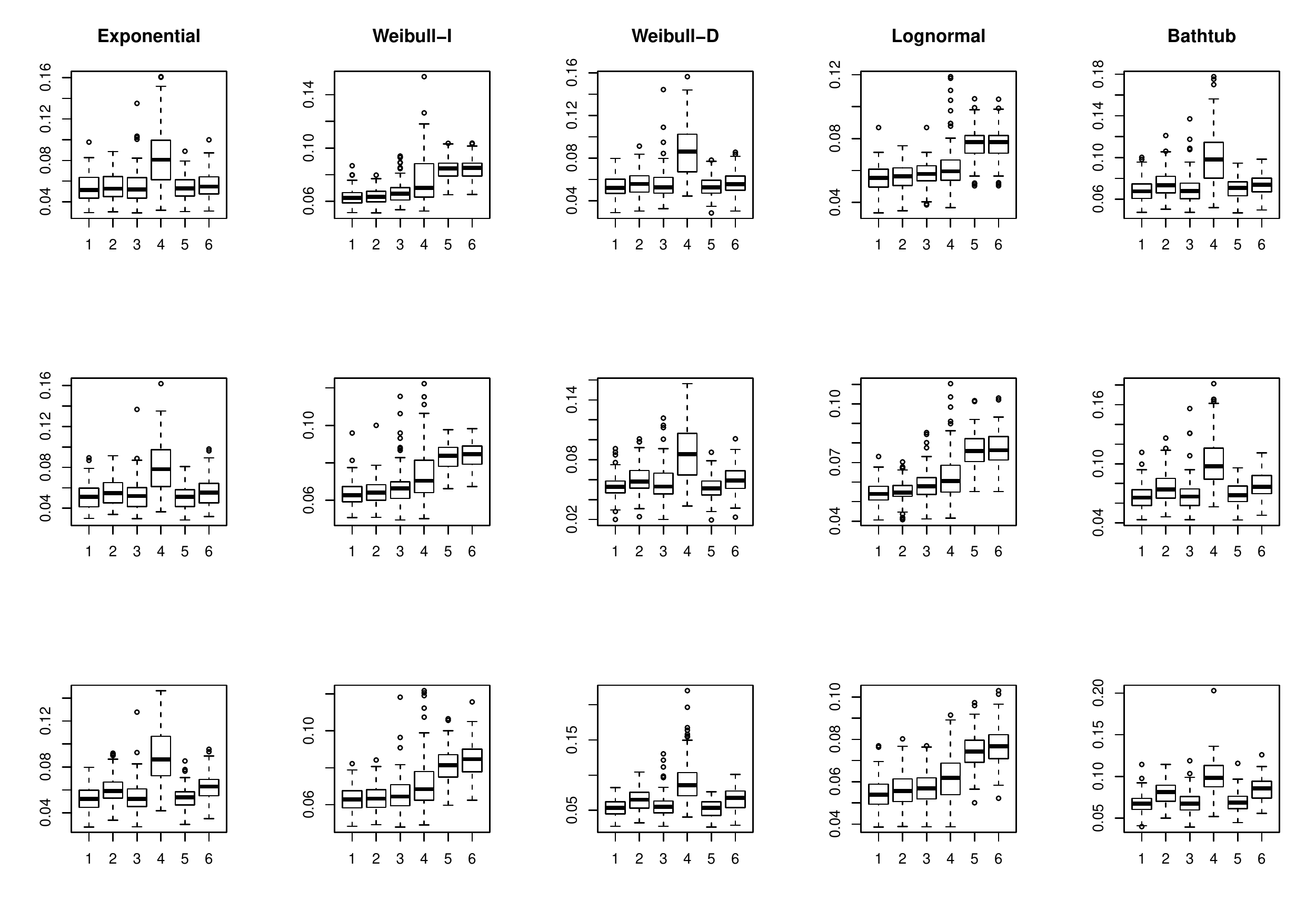}
\caption{\label{fig:Boxplot2}Setting $1$: IBS boxplots with heavy censoring and $N=300$. Methods are numbered as 
$1$-LTRCIT, $2$-Conditional inference survival tree, $3$-LTRCART, $4$-Relative risk survival tree, $5$-Cox model,
$6$-Cox model ignoring left-truncation. First to third row corresponding to left-truncation time $L\sim U(0,1)$, $L\sim U(0,2)$ and $L\sim U(0,3)$, respectively.}
\end{sidewaysfigure}

\begin{sidewaysfigure}
\includegraphics[width=\textwidth]{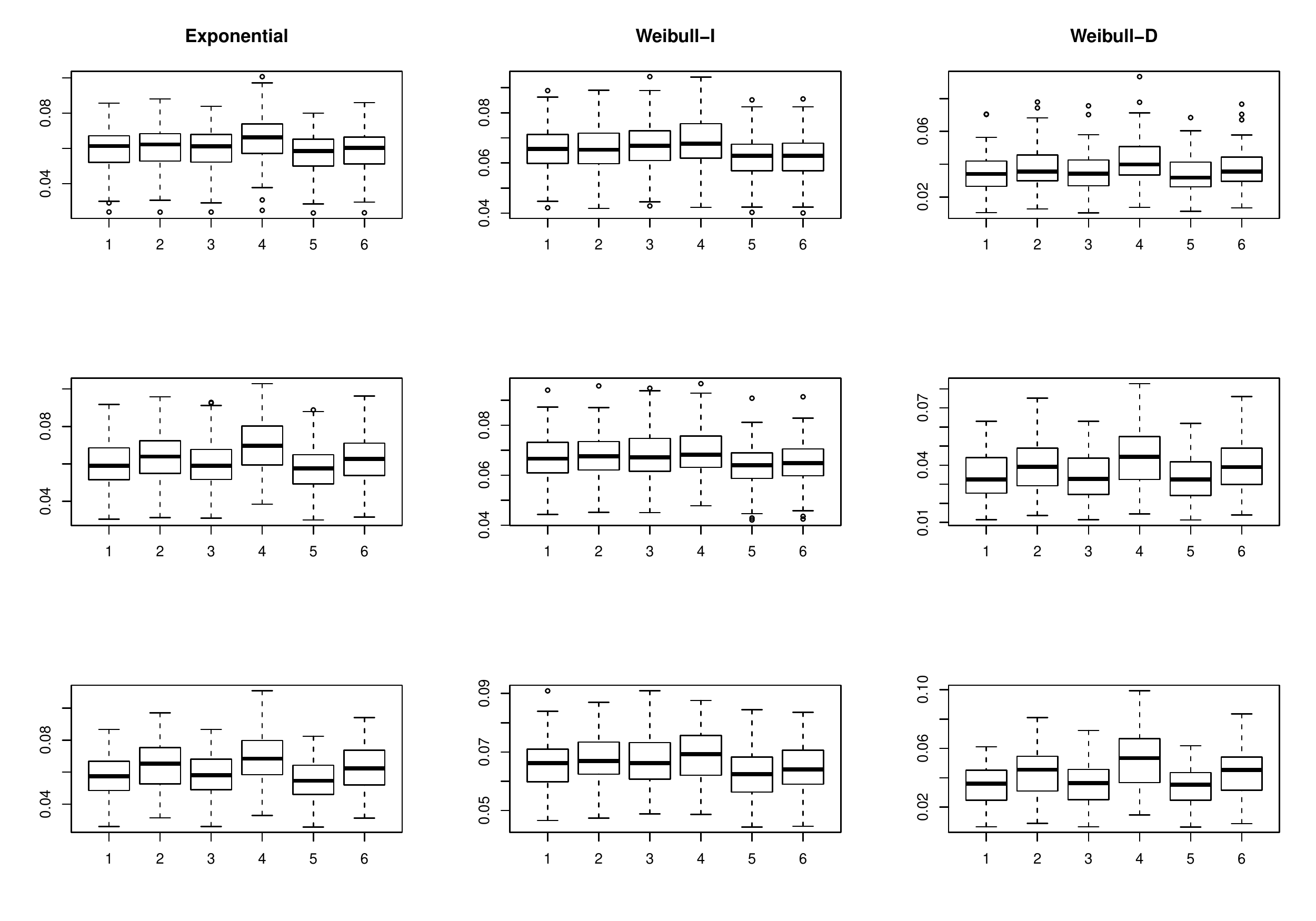}
\caption{\label{fig:Boxplot3}Setting $2$: IBS boxplots with light censoring and $N=300$. Methods are numbered as 
$1$-LTRCIT, $2$-Conditional inference survival tree, $3$-LTRCART, $4$-Relative risk survival tree, $5$-Cox model,
$6$-Cox model ignoring left-truncation. First to third row corresponding to left-truncation time $L\sim U(0,1)$, $L\sim U(0,2)$ and $L\sim U(0,3)$, respectively.}
\end{sidewaysfigure}

\begin{sidewaysfigure}
\includegraphics[width=\textwidth]{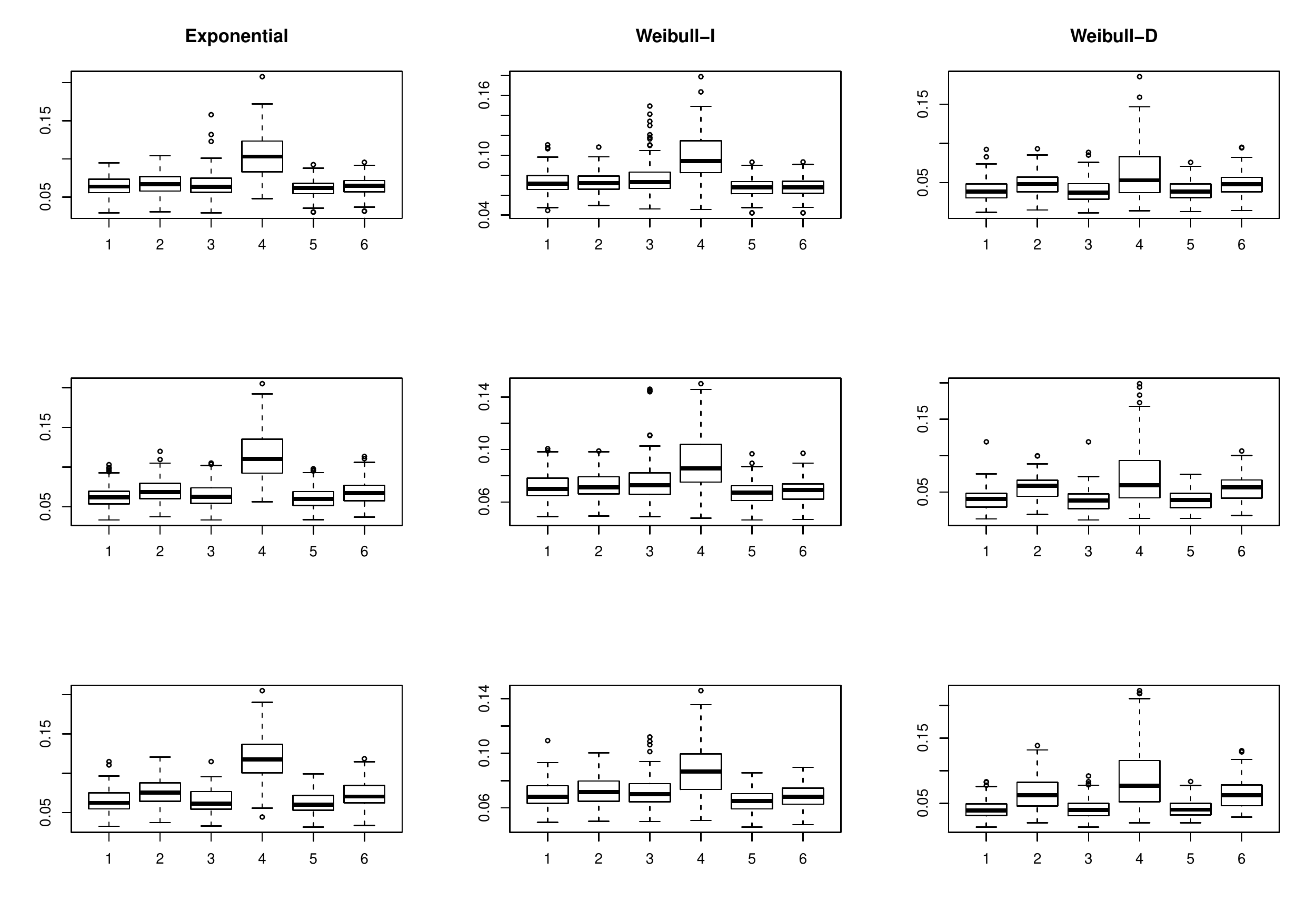}
\caption{\label{fig:Boxplot4}Setting $2$: IBS boxplots with heavy censoring and $N=300$. Methods are numbered as 
$1$-LTRCIT, $2$-Conditional inference survival tree, $3$-LTRCART, $4$-Relative risk survival tree, $5$-Cox model,
$6$-Cox model ignoring left-truncation. First to third row corresponding to left-truncation time $L\sim U(0,1)$, $L\sim U(0,2)$ and $L\sim U(0,3)$, respectively.}
\end{sidewaysfigure}

\begin{sidewaysfigure}
\includegraphics[width=\textwidth]{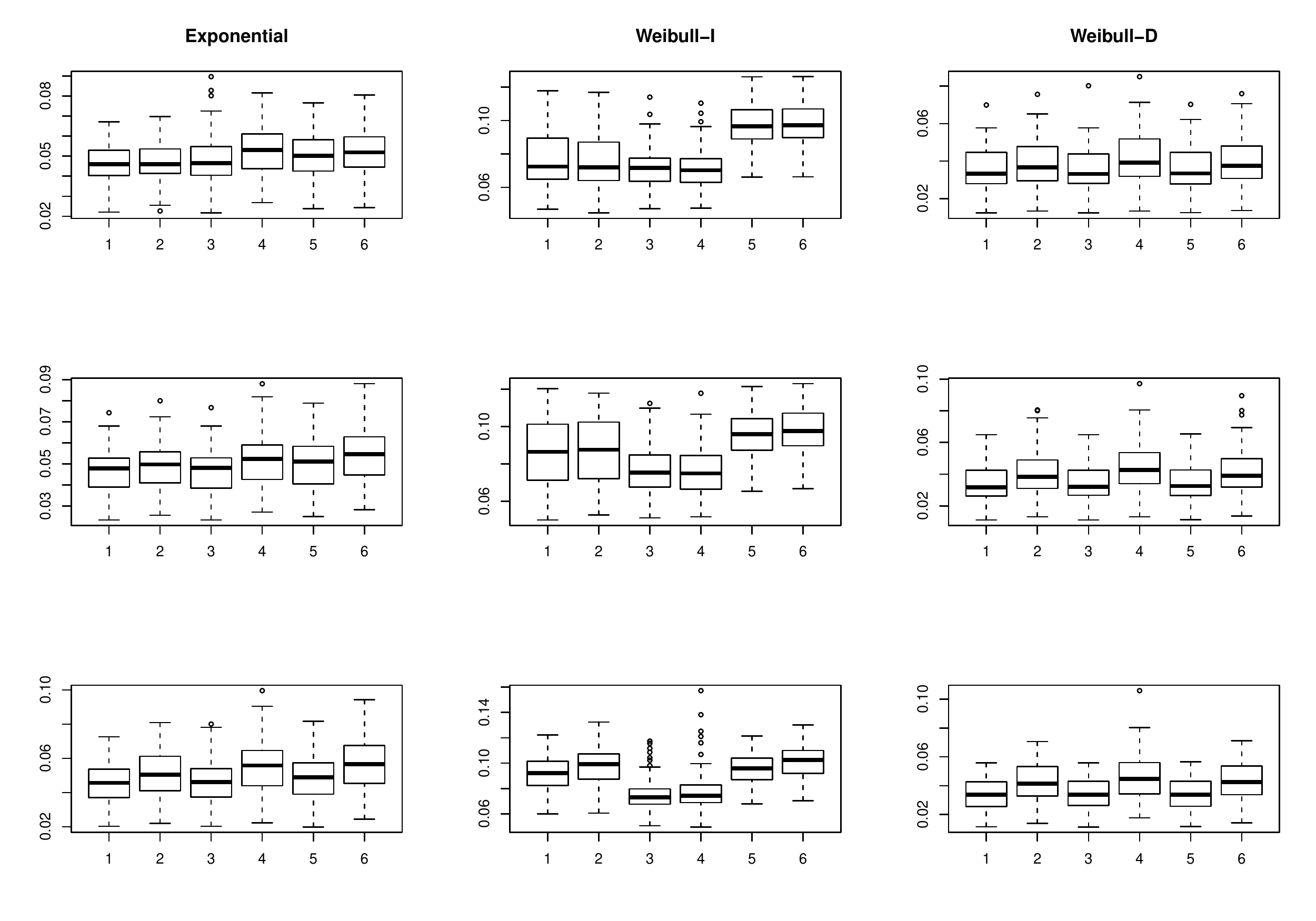}
\caption{\label{fig:Boxplot5}Setting $3$: IBS boxplots with light censoring and $N=300$. Methods are numbered as 
$1$-LTRCIT, $2$-Conditional inference survival tree, $3$-LTRCART, $4$-Relative risk survival tree, $5$-Cox model,
$6$-Cox model ignoring left-truncation. First to third row corresponding to left-truncation time $L\sim U(0,1)$, $L\sim U(0,2)$ and $L\sim U(0,3)$, respectively.}
\end{sidewaysfigure}

\begin{sidewaysfigure}
\includegraphics[width=\textwidth]{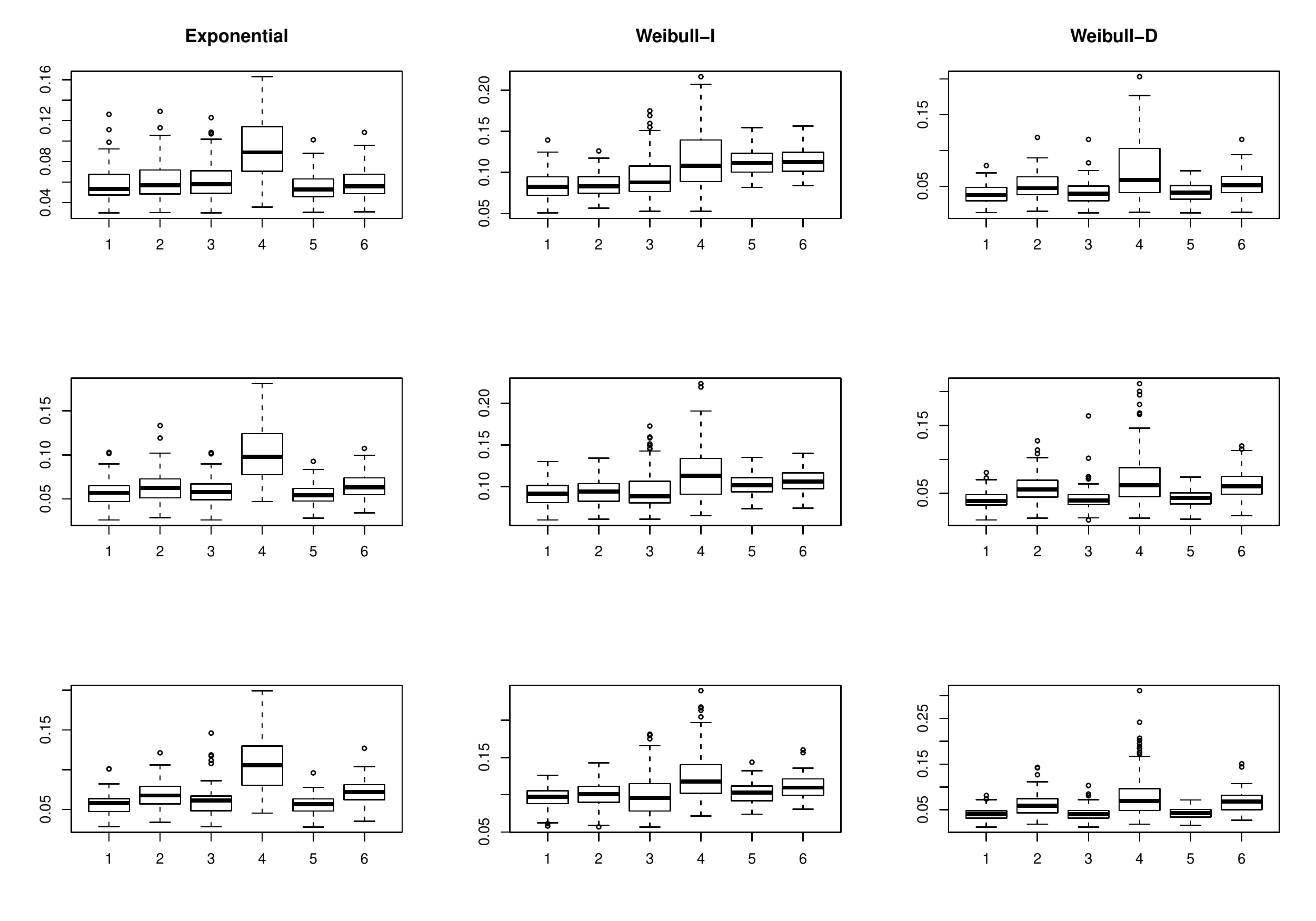}
\caption{\label{fig:Boxplot6}Setting $3$: IBS boxplots with heavy censoring and $N=300$. Methods are numbered as 
$1$-LTRCIT, $2$-Conditional inference survival tree, $3$-LTRCART, $4$-Relative risk survival tree, $5$-Cox model,
$6$-Cox model ignoring left-truncation. First to third row corresponding to left-truncation time $L\sim U(0,1)$, $L\sim U(0,2)$ and $L\sim U(0,3)$, respectively.}
\end{sidewaysfigure}

Since boxplots can look very similar between different methods, we also compare the methods' results using the signed-rank test. The results for each setup are as follows.

\begin{itemize}
	\item \textbf{Setup} $1$ -- \textbf{Tree structured data} \\
	We would expect LTRC trees to perform well in this setup, since the underlying data structure is a tree. This is indeed the case as both LTRC trees perform significantly better than the Cox PH model, regardless of sample size, censoring distribution and left-truncation rate. In terms of sensitivity to censoring rate, LTRCART is most sensitive to heavy censoring, as its performance deteriorates more with higher censoring rate. In contrast, the Cox model is the least sensitive method as its performance is most resistant to heavy censoring. 
	
Ignoring left-truncation results in significantly worse performance in all cases, and its effect becomes more obvious in the high left-truncation rate setting.  
	
LTRCART and LTRCIT have indistinguishable performance in the light censoring case with large sample size. However, when the censoring rate is high or the sample size is small, LTRCIT outperforms LTRCART. That is, LTRCART is more sensitive to heavy censoring and small sample size than is LTRCIT. This conclusion coincides with the results in Section \ref{Sec 3.1}, as one can see that the more frequently a tree can recover the correct tree structure, the better its predictive performance is.  
	
	\item \textbf{Setup} $2$ -- \textbf{Proportional hazards data with log hazard linearly dependent on covariates}\\
	In this setup, the Cox proportional hazards model is expected to perform the best since data are generated to be consistent with the Cox PH model. Indeed, the Cox PH model outperforms both LTRC trees in all settings.
	
	The relative performance of LTRCART tree and LTRCIT tree is a little different in this setup. Here, the two LTRC trees are indistinguishable in the heavy censoring case, while LTRCIT performs better under light censoring. 
	
	Ignoring left-truncation results in worse performance for all methods, and it becomes more obvious with higher left-truncation rate.
	
	\item \textbf{Setup} $3$ -- \textbf{Complex non-linear model}\\
	In this setup, both LTRC trees and the Cox PH model are the wrong model, making this a test of the robustness of each method.

The results show that LTRC trees clearly outperform the Cox PH model for exponential and Weibull increasing hazard distributions, especially in the case of light censoring and large sample size. In fact, both LTRC trees are significantly better than the Cox PH model for all distributions, in the case of light censoring and large sample size. When censoring rate increases and/or sample size decreases, sensitivity to high censoring and small sample size undermine the trees' advantage over the Cox PH model. Nevertheless, the LTRC trees never perform significantly worse than the Cox PH mode, which demonstrates that the LTRC trees have more robust performance than does the Cox PH model.  
\end{itemize} 

In general, increasing sample size favors LTRC trees over the Cox model, especially in the heavy censoring cases. A larger sample size also results in relatively worse performance for methods that ignore left-truncation. The reason is presumably that a larger sample size reduces the variability of all methods, in which case the benefits of accounting for left-truncation stand out. Since trees are relatively unstable compared to the Cox model, they benefit more from a larger sample size than does the Cox model. 

In terms of survival distribution, trees perform relatively better for the Weibull distribution with increasing hazard and the Lognormal distribution than for the Exponential distribution, the Weibull distribution with decreasing hazard and the distribution with bathtub-shaped hazard. Indeed, the Weibull distribution with increasing hazard and the Lognormal distribution look similar to each other (the density peaking at different times for different leaves) in Figure \ref{density}, while the other three distributions share a similar pattern (each leaf's density peaking at roughly the same time). It is apparently easier for trees to separate groups that peak at different times than those that peak at similar times, and therefore trees work better for the Weibull distribution with increasing hazard and Lognormal distribution.          

\section{Real LTRC data application}
The assay of serum free light chain data for $7874$ subjects in the \texttt{R} package \texttt{survival} \citep{survival-package} is used as a data example. It is a random sample containing one-half of the subjects from a study of the relationship between serum free light chain (FLC) and mortality by \cite{dispenzieri2012use}. 
The objective of the study is to determine whether the free light chain (FLC) assay provides prognostic information relevant to the general population. \cite{dispenzieri2012use} concluded that a nonclonal elevation of FLC is a significant predictor of worse overall survival in the general population of persons without plasma cell disorder.

The predictors of interest are
\begin{itemize}
	\item Age
	\item Sex: \hspace{0.05in} F=female, M=male
	\item FLC: \hspace{0.05in} the FLC group for the subject, ranging from $1,2,...,10$ ($1$=lowest decile, $10$=highest decile)
	\item Creatinine: \hspace{0.05in} serum creatinine
\end{itemize}

The original analysis was based on the Cox model including Age as one of the covariates, which showed Age, Sex, FLC top decile and Creatinine were all significant. The response was time from enrollment of study to death/censoring. However, as noted by \cite{klein2003survival}, age is often used as a covariate when it should be used as a left-truncation point. This is particularly true in a mortality study such as this one, since greater age is almost always associated with higher risk of death, making it not very meaningful (or surprising) to have age as a (significant) covariate. Also, the real response of interest should be the subject's life length, not the time from enrollment in the study to death/censoring.   

We analyze this data using LTRC trees with age as left-truncation point and the actual death/censoring time as response. From the top panels of Figures \ref{fig-LTRC-ctree} and \ref{fig-LTRC-rpart}, we can see that both LTRC trees identify the top FLC decile (FLC=$10$) as the most important predictor of overall survival, independent of other factors such as Sex and Creatinine. \cite{dispenzieri2012use} collapsed the data into $2$ groups, $10$th decile vs deciles $1$ through $9$, before analyzing the data using the Cox model, which leads to the same conclusion. Thus, we can see the LTRC tree results are well-aligned with the original result of \cite{dispenzieri2012use}. The two LTRC trees are broadly similar, with LTRCART having more end splits on Creatinine in the left branch compared to the LTRCIT tree, which may be caused by the tendency of LTRCART to split more on continuous variables. 

In contrast, both regular survival tree results (ignoring the left-truncation) seem to underestimate the effect of FLC, with the conditional inference survival tree of HHZ (lower panel of Figure \ref{fig-LTRC-ctree}) identifying FLC as important only for males, and the relative risk survival tree (lower panel of Figure \ref{fig-LTRC-rpart}) completely missing identifying FLC as a significant predictor.

Note that each terminal node of LTRCIT gives the estimated KM curve on that node, while LTRCART shows a single number, the relative risk, on its terminal node, which is the proportion of hazard of that terminal node relative to baseline hazard (root node). A larger number in a terminal node of LTRCART thus means higher hazard rate in that node, which implies a steeper KM curve on the corresponding terminal node of LTRCIT. It is also interesting to note that the KM curves on the terminal nodes of LTRCIT all start dropping at time $50$ rather than $0$. This is because all of the subjects in these data were aged $50$ or greater when entering this study, and therefore the KM curves are left-truncated at time $50$.   

\begin{figure}[H]
\centering
\begin{minipage}{\linewidth} \centering
    \includegraphics[width=0.8\linewidth, height=0.4\textheight]{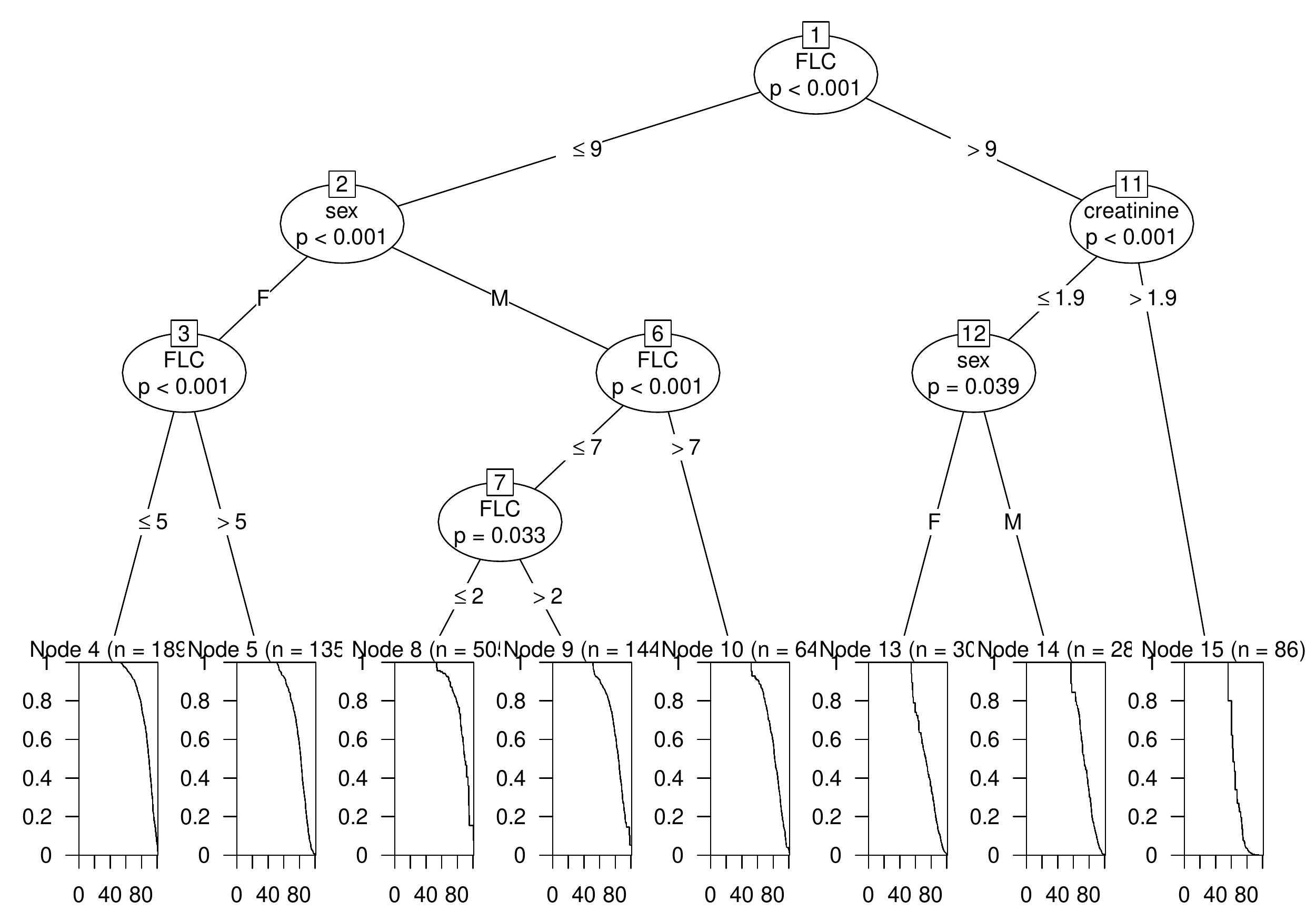}
\end{minipage}
\begin{minipage}{\linewidth} \centering
    \includegraphics[width=0.8\linewidth, height=0.4\textheight]{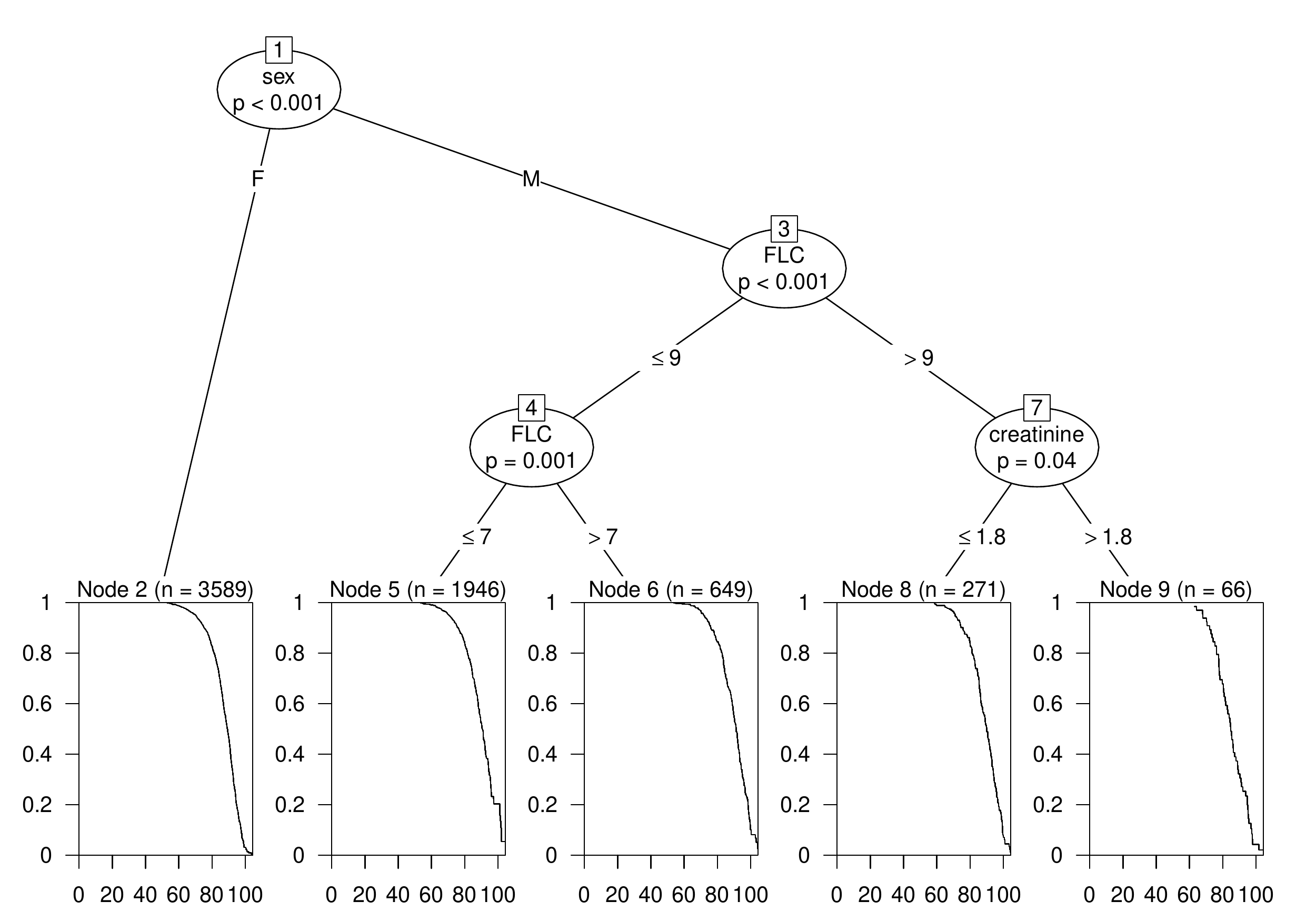}
\end{minipage}
\caption{Upper panel shows the LTRCIT tree for the serum free light chain data; the lower panel shows the conditional inference survival tree ignoring left-truncation.}
\label{fig-LTRC-ctree}
\end{figure}

\begin{figure}[H]
\centering
\begin{minipage}{\linewidth}  \centering
    \includegraphics[width=0.8\linewidth, height=0.4\textheight]{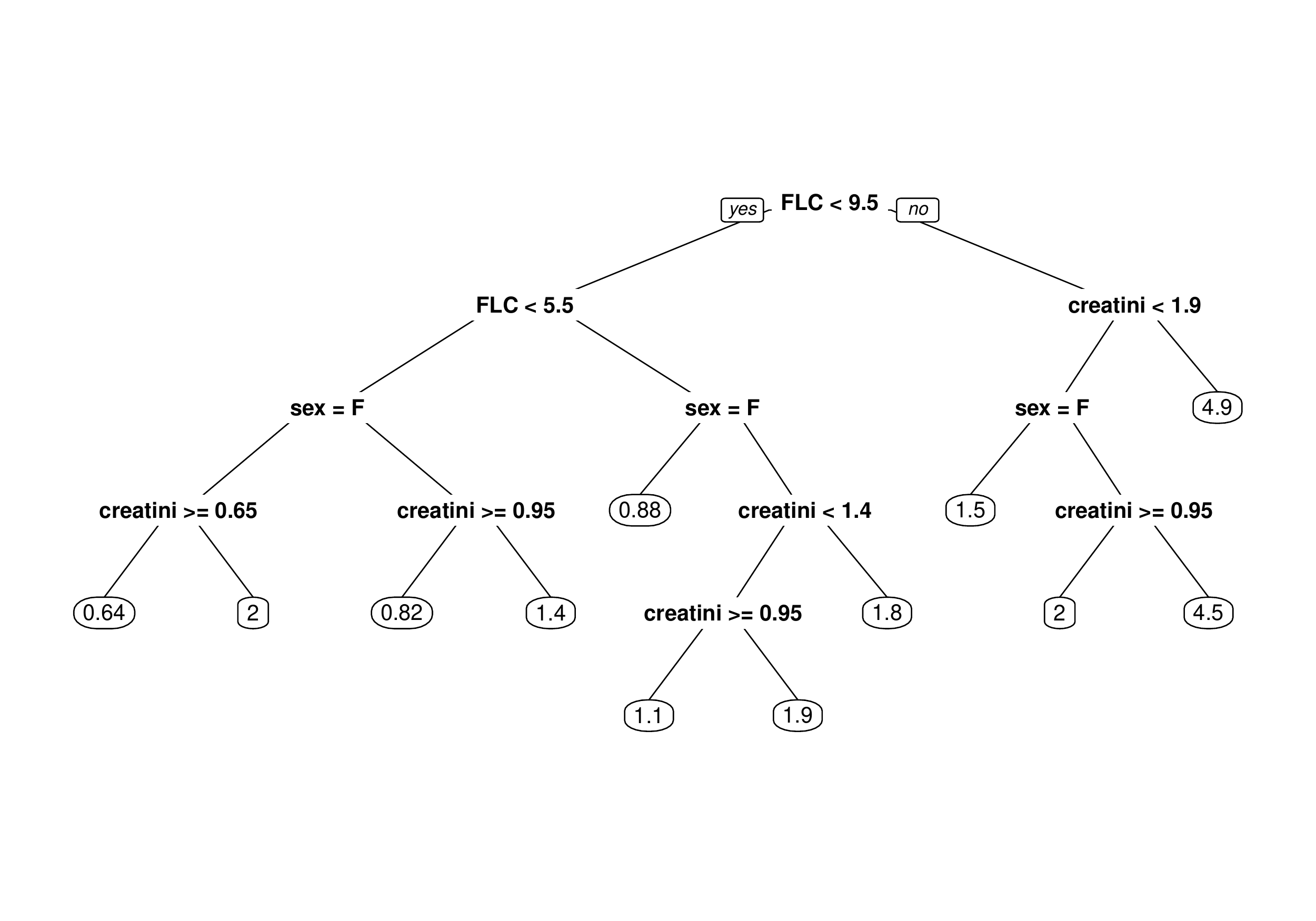}
\end{minipage}
\begin{minipage}{\linewidth}  \centering
    \includegraphics[width=0.9\linewidth, height=0.4\textheight]{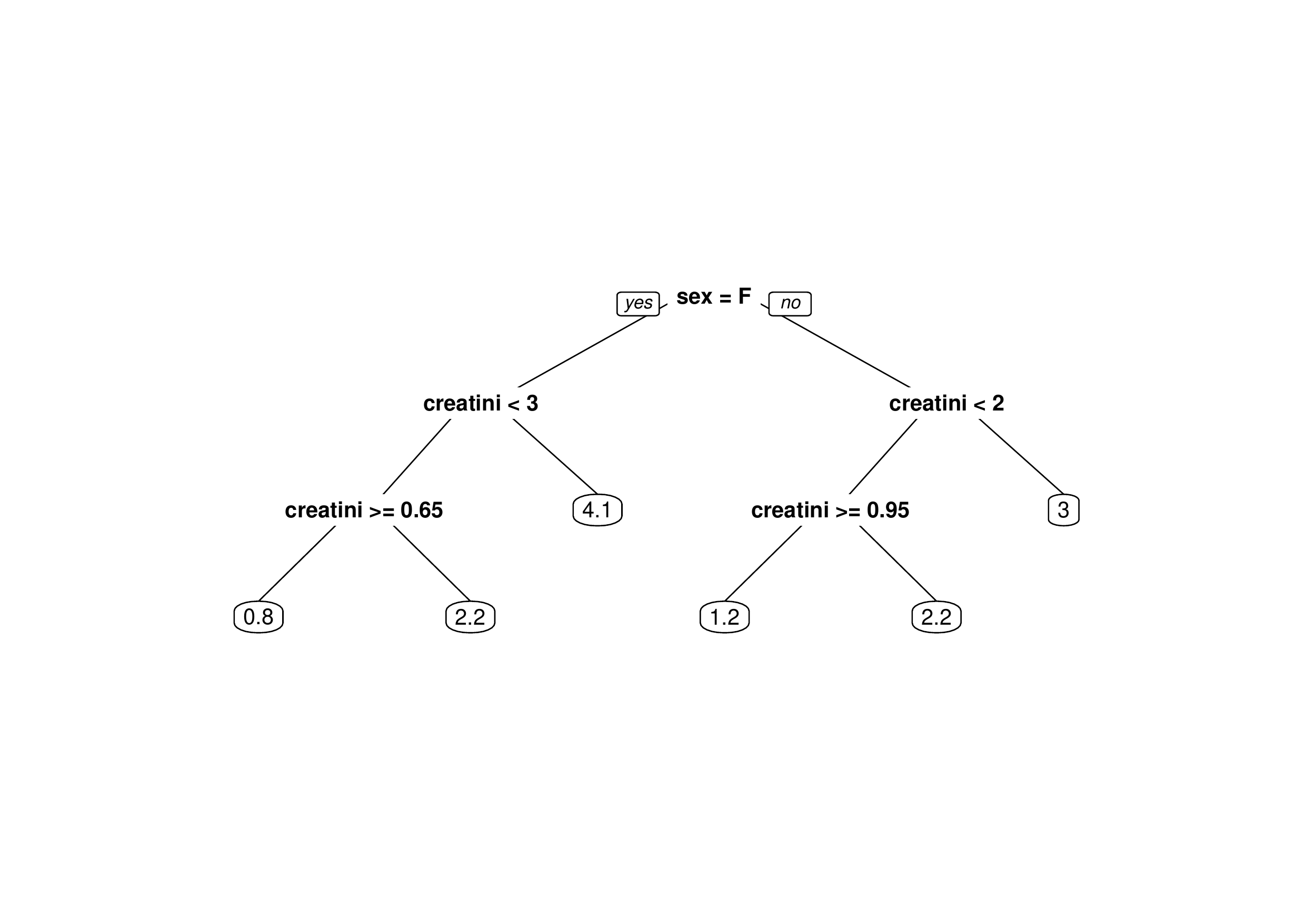}
\end{minipage}
\caption{Upper panel shows the LTRCART relative risk survival tree for the serum free light chain data; the lower panel shows the relative risk survival tree ignoring left-truncation.}
\label{fig-LTRC-rpart}
\end{figure}

\begin{table}[t]
\center
\caption{\label{cox-table}  Cox model on serum free light chain data with Age as delay entry time} 
\begin{tabular}{ lcrcccccrcl}
\hline
 Effect && $coef$ && $exp(coef)$ && $se(coef)$ && $z$ && $p$\\
  \hline
  Sex-Male      && $0.246894$ && $1.280043$  && $0.047972$ && $5.147$ && $2.65e^{-7}$\\    
   FLC       && $0.106503$ && $1.112381$  && $0.008847$ && $12.038$ && $< 2e^{-16}$\\   
Creatinine   && $0.234216$ && $1.263918$  && $0.031452$ && $7.447$ && $9.57e^{-14}$\\   
  \hline
\end{tabular}
\end{table}
If we fit the Cox model with age as left-truncation points, we get the results given in Table \ref{cox-table}.
The Cox model incorporating left-truncation also identifies Sex, FLC and Creatinine as significant risk factors. FLC also seems to be the most significant predictor. However, the Cox model has no way to automatically detect the 10th decile as the most significant factor; in fact, \cite{dispenzieri2012use} used domain expertise to collapse the data into the $2$ groups $10$th decile vs. deciles $1$ through $9$, before using the Cox model for analysis.

 \begin{table}[t]
\center
\caption{\label{cox-strat}  LTRC Cox model on serum free light chain data stratified by Sex} 
\begin{tabular}{ lcrcccccrcl}
\hline
 Effect && $coef$ && $exp(coef)$ && $se(coef)$ && $z$ && $p$\\
  \hline
  Male:FLC    && $0.126748$ && $1.135130$  && $0.009181$ && $13.805$ && $< 2e^{-16}$\\    
Female:FLC    && $0.092362$ && $1.096761$  && $0.009574$ && $9.647$ && $< 2e^{-16}$\\   
Creatinine    && $0.232506$ && $1.261758$  && $0.031466$ && $7.389$ && $1.48e^{-13}$\\   
  \hline
\end{tabular}
\end{table}
To see if FLC is significant for both genders, we fit the stratified Cox model (Table \ref{cox-strat}).
It is clear from Table \ref{cox-strat} that FLC is a significant factor for both genders. This is consistent with the top panels of both Figures \ref{fig-LTRC-ctree} and \ref{fig-LTRC-rpart}, but not the lower panels of Figures \ref{fig-LTRC-ctree} and \ref{fig-LTRC-rpart}, demonstrating how important it is to incorporate left-truncation when it is appropriate. 

\section{Using LTRC trees to fit survival trees with time-varying covariates}
The general strategy to build a time-varying covariates survival tree consists of two steps: first, split each subject into several pseudo-subjects, inside which covariates are time-independent; second, apply the LTRC tree algorithm on those pseudo-subjects to fit a tree. An example illustrates the process.

Assume the survival data (with time-varying covariates) comes in a longitudinal format (the so-called ``long'' format), where each subject may have multiple records of measurements of risk factors during their multiple visits. For example, the top part of Table \ref{patient-eg1} gives the information of a subject that consists of $3$ records. The event--death is observed at time $27$, while the $3$ measurements of Age and CD$4$ are recorded at the beginning and times $10$ and $20$ respectively.
\begin{table}[t]
\center
\caption{\label{patient-eg1} Original and Reformatted data for a patient} 
\begin{threeparttable}
\begin{tabular}{ ccccccccccc}
\hline
 Patient.ID && Age && CD$4$ && Time && Death ($\delta$) \\
  \hline
 $1$ && $45$ && $27$ && $0$ && $0$ &&\\
 $1$ && $45$  && $31$ && $10$ && $0$&& \\
 $1$ && $45$  && $25$ && $20$ && $0$ &&\\  
 $1$ && $-$  && $-$ && $27$ && $1$ &&\\   \hline
 Patient.ID && Age && CD$4$ && Start && End && Death ($\delta$) \\ \hline
 $1$ && $45$ && $27$ && $0$ && $10$ && $0$ \\
 $1$ && $45$  && $31$ && $10$&& $20$ && $0$ \\
 $1$ && $45$  && $25$ && $20$&& $27$ && $1$ \\ \hline
\end{tabular}
 \begin{tablenotes}[para,flushleft]
\footnotesize{The top table gives the original data, while the bottom table shows reformatted data.}
  \end{tablenotes}
\end{threeparttable}
\end{table}
\noindent

The reformatted (transformed) data structure has structure shown at the bottom of Table \ref{patient-eg1}, where each observation (each row) becomes left-truncated (at time Start) and right-censored/event (at time End) data. If we fit the reformatted data in Table \ref{patient-eg1} with an LTRC tree, we effectively get a survival tree that splits on a time-varying covariate. 

The goal is to find intervals such that covariates do not change values inside each interval. If $x(t)$ is changing continuously, infinitely many intervals would be needed to represent data this way. In practice, however, $x(t)$ is typically not monitored all of the time, but rather occasionally, such as when patients are visiting a hospital or clinic. This means that in practice the time-varying covariates are assumed constant between visits.

Such a procedure to process time-varying covariates is not new. In fact, it has been adopted to prepare data in order to fit time-varying covariates in the Cox PH model, and is usually referred to as the Andersen-Gill method \citep{andersen1982cox}. This is more efficient in the tree context than creating the pseudo-subjects each time inside each splitting node as in \cite{bacchetti1995survival}. 

Technically, the definition of a pseudo-subject is as follows:
\begin{definition}
$\forall j \in \{1,2,...,n\}$, the survival information of the $j$th subject $\left(Y_j,\delta_j, x_j(t)|_{0< t \leq Y_j}\right)$ is replaced by a set of $S_j$ pseudo-subjects, with survival information
\[  \left\lbrace \left(L^l_j, R^l_j, \delta^l_j, x_j(t)|_{L^l_j< t \leq  R^l_j}\right)| l=1,...,S_j\right\rbrace\]
where $\left(L^l_j, R^l_j, \delta^l_j, x_j(t)|_{L^l_j< t \leq  R^l_j}\right)$ is the $l$th pseudo-subject and 
\begin{enumerate}
	\item $\bigcup_{l=1}^{S_j} (L^l_j, R^l_j] = (0, Y_j]$
	\item $(L^l_j, R^l_j] \cap (L^h_j, R^h_j] = \emptyset$ if $l \neq h$
	\item $\delta^l_j = \begin{cases} 		
	\delta_j & \text{if}\hspace{0.1in} R^l_j = Y_j \\
	0& \text{if}\hspace{0.1in} R^l_j \neq Y_j
	\end{cases}$
	\item $x(t)$ is constant in each Pseudo-subject, i.e. $\forall l, x_j(t)$ is constant over$(L^l_j, R^l_j]$ 
\end{enumerate}
\end{definition}
\noindent

\subsection{Reasoning about the data reformulation}
Note that one implicit requirement of using the proposed LTRC trees is that the observations in the sample are independent. The reformulation procedure creates several pseudo-subjects from one original observation, and therefore they are not independent, since in our example we know that all of the pseudo-subjects (rows at the bottom of Table \ref{patient-eg1}) have $\delta=0$ except (possibly) for the last one. We now show that they can be treated as independent subjects in terms of contributing to the test statistics inside the algorithms of the proposed LTRC trees.

Without loss of generosity (WLOG), we focus on one specific observation $(Y_i, \delta_i)$. Assume we partition the time interval $[0,Y_i]$ into three segments at time $\tilde{t}_1$ and $\tilde{t}_2$ to create three pseudo-subjects. By definition, the three pseudo-subjects are:
	\begin{enumerate}
		\item $(0, \tilde{t}_1,  \tilde{\delta}_1 =0)$
		\item $(\tilde{t}_1, \tilde{t}_2,  \tilde{\delta}_2 =0)$ 
		\item $(\tilde{t}_2, Y_i,  \delta_i)$
	\end{enumerate}		

We now establish the following lemmas. 

\begin{lemma}  \label{lemma.1} $U_i = \sum^3_{j=1} \widetilde{U}^i_j$, where $U_i$ is the log-rank score of observation $(Y_i, \delta_i)$; $\widetilde{U}^i_j$ is the log-rank score of the $j$th pseudo-subject created from $(Y_i, \delta_i)$, and all pseudo-subjects are treated as if they were independent.
\end{lemma}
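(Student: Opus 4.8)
The plan is to turn the claimed identity into a telescoping sum, after first noticing that the two LTRC log-rank scores in (1) and (2) are a single formula in disguise: for an LTRC observation $(L_i,R_i,\delta_i)$ with $\delta_i\in\{0,1\}$,
\[
U_i \;=\; \delta_i + \log\hat{S}(R_i) - \log\hat{S}(L_i),
\]
since the additive $1$ is present exactly when $\delta_i=1$. In this additive form the lemma reduces to an elementary cancellation plus one structural observation: that the Kaplan--Meier estimate $\hat{S}$ is unchanged by the reformulation.

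With the three pseudo-subjects written as $(\tilde{L}_1,\tilde{R}_1,\tilde{\delta}_1)=(0,\tilde{t}_1,0)$, $(\tilde{L}_2,\tilde{R}_2,\tilde{\delta}_2)=(\tilde{t}_1,\tilde{t}_2,0)$ and $(\tilde{L}_3,\tilde{R}_3,\tilde{\delta}_3)=(\tilde{t}_2,Y_i,\delta_i)$, applying the displayed formula to each and adding gives
\[
\sum_{j=1}^{3}\widetilde{U}^i_j \;=\; \delta_i + \bigl(\log\hat{S}(\tilde{t}_1)-\log\hat{S}(0)\bigr) + \bigl(\log\hat{S}(\tilde{t}_2)-\log\hat{S}(\tilde{t}_1)\bigr) + \bigl(\log\hat{S}(Y_i)-\log\hat{S}(\tilde{t}_2)\bigr).
\]
The three bracketed increments telescope, leaving $\delta_i+\log\hat{S}(Y_i)-\log\hat{S}(0)=\delta_i+\log\hat{S}(Y_i)$ (using $\hat{S}(0)=1$). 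But this is precisely the score that (1)--(2) assign to $(0,Y_i,\delta_i)$, i.e.\ the log-rank score $U_i$ of the original right-censored observation $(Y_i,\delta_i)$ viewed as left-truncated at time $0$, so the identity follows once the $\hat{S}$'s on the two sides are known to agree.

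That last point is the only step that needs real care, and it is the main obstacle. Splitting a subject's follow-up $(0,Y_i]$ into the disjoint, exhaustive intervals $(0,\tilde{t}_1]$, $(\tilde{t}_1,\tilde{t}_2]$, $(\tilde{t}_2,Y_i]$ (properties 1 and 2 in the definition of a pseudo-subject) changes neither the number at risk nor the number of events at any time $t$: at each $t\in(0,Y_i]$ exactly one of the three pseudo-subjects lies in the LTRC-modified risk set, in place of the one original subject, and a pseudo-subject records an event only at $t=Y_i$, and only when $\delta_i=1$ (property 3), exactly as the original subject does. Doing this for every subject, the product-limit estimator on the reformulated sample --- which is a function of the data only through these at-risk and event counts --- is literally the same function of time as on the original sample; in particular, treating the pseudo-subjects as independent when plugging them into the estimator does no harm. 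Combining this invariance with the telescoping computation gives $U_i=\sum_{j=1}^{3}\widetilde{U}^i_j$, and the same argument goes through verbatim for an arbitrary partition into $S_i$ pseudo-subjects.
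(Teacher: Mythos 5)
Your proof is correct and follows essentially the same route as the paper's: write each pseudo-subject's LTRC score in the additive form $\delta+\log\hat{S}(R)-\log\hat{S}(L)$, telescope the sum, and justify that the product-limit estimator is unchanged because the reformulation alters neither the at-risk counts nor the event counts at any time. Your treatment is, if anything, slightly more explicit than the paper's about why the risk sets are preserved and about using $\hat{S}(0)=1$.
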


\begin{proof}
The log-rank score $U$ for right-censored data  $(Y_i, \delta_i)$ is
\[	U_i = \delta_i + \log \hat{S}(Y_i),\]
where $\hat{S}$ is the KM estimator from the right-censored (original) survival data. 
The log-rank score $\widetilde{U}$ for left-truncated and right-censored data  $(L_i, R_i, \delta_i)$ is $$\widetilde{U}_i = \delta_i + \log \tilde{S}(R_i) - \log \tilde{S}(L_i).$$ The log-rank score $\widetilde{U}$ for the three associated pseudo-subjects are then:
	\begin{enumerate}
		\item $\widetilde{U}^i_1 = \log \tilde{S}(\tilde{t}_1) - \log \tilde{S}(0)$,
		\item $\widetilde{U}^i_2 = \log \tilde{S}(\tilde{t}_2) - \log \tilde{S}(\tilde{t}_1)$, and  
		\item $\widetilde{U}^i_3 = \delta_i + \log \tilde{S}(Y_i) - \log \tilde{S}(\tilde{t}_2)$,
	\end{enumerate}		
where $\tilde{S}$ is the KM estimator from the LTRC survival data (pseudo-subjects). Note that both $\hat{S}$ and $\tilde{S}$ are estimated as if all observations are independent. Note that 
$\tilde{S} = \hat{S}$,
since each can be written as $S(t) = \prod_{t_i \leq t}[1-\frac{d_i}{n_i}] $, where $t_i$ is a distinct event time, $d_i$ is the number of events at $t_i$, and $n_i$ is the risk set at $t_i$. One can easily check that creating  pseudo-subjects does not change $d_i$ or $n_i$ for any $t_i$, and therefore this equality holds.

It is then easy to see that
\begin{equation}
U_i = \widetilde{U}^i_1 +\widetilde{U}^i_2 +\widetilde{U}^i_3 ;
\end{equation}
that is, the sum of log-rank scores of all pseudo-subjects of an original (right-censored) observation is equal to the log-rank score of that observation, by treating the pseudo-subjects as if they were independent.    
\end{proof}

\begin{lemma} \label{lemma.2} $\mathcal{L}_i = \sum^3_{j=1} \widetilde{\mathcal{L}}^i_j$, where $\mathcal{L}_i$ is the contribution of observation $(Y_i, \delta_i)$ to the full log-likelihood and $\widetilde{\mathcal{L}}^i_j$ is the contribution of the $j$th pseudo-subject created from $(Y_i, \delta_i)$ to the full log-likelihood as if the pseudo-subjects were independent.
\end{lemma}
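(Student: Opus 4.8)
The plan is to mirror the proof of Lemma~\ref{lemma.1}: write the contribution of a single observation to the full log-likelihood in the right-censored parametrization (4) and the contribution of each pseudo-subject in the LTRC parametrization (5), and then observe that the sum over the pseudo-subjects telescopes back to the original contribution. The only ingredient beyond bookkeeping is that the objects appearing in these contributions -- the cumulative hazard $\Lambda$, or in the \texttt{rpart} implementation the Nelson--Aalen estimate $\hat{\Lambda}_0$ used as the exposure -- are \emph{unchanged} by the reformulation, for exactly the same reason as in Lemma~\ref{lemma.1}: splitting $(0,Y_i]$ into the disjoint intervals $(0,\tilde t_1],(\tilde t_1,\tilde t_2],(\tilde t_2,Y_i]$ leaves the at-risk count $n_i$ and the event count $d_i$ at every distinct event time untouched (each pseudo-interval covers a given event time exactly when the original interval did, and the single possible event at $Y_i$ is carried by the last pseudo-subject since $\tilde\delta_3=\delta_i$).

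Concretely, I would proceed as follows. First, from (4), the contribution of $(Y_i,\delta_i)$ is $\mathcal{L}_i=\delta_i\log\lambda(Y_i)-\Lambda(Y_i)$. Second, from (5), the contribution of an LTRC observation $(L,R,\delta)$ is $\delta\log\lambda(R)-\bigl(\Lambda(R)-\Lambda(L)\bigr)$; evaluating this for the three pseudo-subjects $(0,\tilde t_1,0)$, $(\tilde t_1,\tilde t_2,0)$, $(\tilde t_2,Y_i,\delta_i)$ gives $-\bigl(\Lambda(\tilde t_1)-\Lambda(0)\bigr)$, $-\bigl(\Lambda(\tilde t_2)-\Lambda(\tilde t_1)\bigr)$, and $\delta_i\log\lambda(Y_i)-\bigl(\Lambda(Y_i)-\Lambda(\tilde t_2)\bigr)$ respectively, since $\tilde\delta_1=\tilde\delta_2=0$ kill the two intermediate $\log\lambda$ terms. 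Third, adding the three: the $\log\lambda$ terms collapse to $\delta_i\log\lambda(Y_i)$, and the $\Lambda$ increments telescope to $-\bigl(\Lambda(Y_i)-\Lambda(0)\bigr)=-\Lambda(Y_i)$ using $\Lambda(0)=0$ (equivalently $\log\hat S(0)=0$). Hence $\sum_{j=1}^{3}\widetilde{\mathcal{L}}^i_j=\delta_i\log\lambda(Y_i)-\Lambda(Y_i)=\mathcal{L}_i$. Finally, I would note that the identical telescoping works verbatim for an arbitrary number $S_i$ of pseudo-subjects built from the partition in the Definition, so the restriction to three segments is only for concreteness.

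I do not expect a genuine obstacle here -- the statement is essentially the additivity of integrals, $\int_0^{Y_i}\lambda(\mu)\,d\mu=\sum_l\int_{L^l_i}^{R^l_i}\lambda(\mu)\,d\mu$, dressed up with event indicators. The one point that needs care, and that I would make explicit, is the invariance of the estimated baseline hazard under reformulation described above; without it the two sides would be using different numerical $\hat\Lambda_0$'s and the identity would be merely formal. A second minor point worth flagging is that when the pseudo-subjects of a single original subject land in different terminal nodes (the case of interest for a time-varying split), the decomposition still holds because the exposure term $\hat\Lambda_0(R^l_i)-\hat\Lambda_0(L^l_i)$ is computed from the common root-node estimate; only the node-specific relative-risk factor $\theta_h$ differs across the pieces, and that factorizes out of the cumulative-hazard part without disturbing the telescoping.
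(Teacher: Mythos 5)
Your proposal is correct and follows essentially the same route as the paper's proof: write the contributions using equations (4) and (5), observe that the $\Lambda$ increments telescope while the censored pseudo-subjects contribute no $\log\lambda$ term, and justify that $\lambda$ and $\Lambda$ are unchanged by the reformulation via the same risk-set/event-count invariance used in Lemma~\ref{lemma.1}. Your additional remarks (extension to an arbitrary number of segments, and the root-node estimation of $\hat\Lambda_0$) are consistent with, though not spelled out in, the paper's argument.
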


\begin{proof}
From equations $(4)$ and $(5)$, we know that the contribution to the full log-likelihood from an observation $(Y_i,\delta_i)$ is
\[
 \mathcal{L}_i =  \delta_i \log \lambda(Y_i) - \Lambda(Y_i) =  \delta_i \log \lambda(Y_i) - \int^{Y_i}_0 \lambda(\mu)d \mu,
\]
while the contribution to the log-likelihood from an LTRC observation $(L_i, R_i, \delta_i)$ is

\[
\widetilde{\mathcal{L}_i} = \delta_i \log \lambda(R_i) - \left(\Lambda(R_i) - \Lambda\left(L_i\right)\right)  = \delta_i \log \lambda(R_i) - \int^{R_i}_{L_i} \lambda(\mu)d \mu .
\]
Thus, the contributions to the log-likelihood from the three pseudo-subjects are
	\begin{enumerate}
		\item $\widetilde{\mathcal{L}}^i_1 = -[\Lambda(\tilde{t}_1)-\Lambda(0)]$,
		\item $\widetilde{\mathcal{L}}^i_2 = -[\Lambda(\tilde{t}_2)-\Lambda(\tilde{t}_1)]$, and 
		\item $\widetilde{\mathcal{L}}^i_3 = \delta_i \log \lambda(Y_i)-[\Lambda(Y_i)-\Lambda(\tilde{t}_2)]$,
	\end{enumerate}	

It can be shown that the hazard function $\lambda$ and cumulative hazard function $\Lambda$ are the same in these two equations using the same argument as in the previous proof, i.e.  distinct event times $(t_i)$, the number of deaths at $t_i$ $(d_i)$, and the risk set at $t_i$ $(n_i)$ are all unaltered by creating pseudo-subjects. Therefore, 
\[	 \mathcal{L}_i = \widetilde{\mathcal{L}}^i_1 + \widetilde{\mathcal{L}}^i_2 + \widetilde{\mathcal{L}}^i_3	\]
\end{proof}

\begin{theorem} \label{theory.1}
For an observation $(Y_i, \delta_i)$, one can use three LTRC observations, which are same as the three pseudo-subjects created from $(Y_i, \delta_i)$, as a substitute for the contribution of observation $(Y_i, \delta_i)$ to the test statistics in the proposed tree algorithms, treating these observations as if they are independent.
\end{theorem}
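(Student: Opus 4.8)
The plan is to read off the two split-selection criteria --- the conditional-inference linear statistic of HHZ used by LTRCIT, and the Poisson/relative-risk deviance reduction of \cite{1992relative} used by LTRCART --- and to observe that in each case the contribution of a single subject enters only through quantities that Lemmas \ref{lemma.1} and \ref{lemma.2} have already shown to decompose additively across that subject's pseudo-subjects. So the proof is essentially a transfer argument: substitute the lemmas into the explicit formulas and check that nothing else in either statistic sees the difference between one genuine observation and its three independent-looking replacements. (The argument for a general number $S_i$ of pseudo-subjects is identical, the telescoping simply running over more terms.)

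First, for LTRCIT. The linear statistic has the form $T_j(L_n,w) = vec\left(\sum_i w_i g_j(X_{ji})\, h(Y_i,\cdot)^T\right)$ with $h(Y_i,\cdot) = U_i$ the (LTRC) log-rank score, so subject $i$ contributes the single term $w_i g_j(X_{ji}) U_i$. I would note two things. First, all pseudo-subjects carved out of subject $i$ carry the same value of any covariate being considered for the current split --- they only get sent to different daughters once the algorithm actually splits on a time-varying covariate, which is exactly the behaviour we want --- hence $g_j$ evaluated on each of them equals $g_j(X_{ji})$, and with unit case weights the pseudo-subjects contribute $g_j(X_{ji})\sum_{l=1}^{3} \widetilde U^i_l$. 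Second, by Lemma \ref{lemma.1}, $\sum_{l=1}^{3} \widetilde U^i_l = U_i$, so this equals $g_j(X_{ji}) U_i$, the original contribution. Since $T_j$ is a plain sum over subjects of such terms, the observed value of the linear statistic --- and therefore the split it induces --- is unchanged. I would flag here the one point that needs care: HHZ judge this statistic against a permutation reference distribution, and subdividing a subject does replace the value $U_i$ in that reference set by the triple $(\widetilde U^i_1,\widetilde U^i_2,\widetilde U^i_3)$; the claim being made, and all that tree construction actually uses, is the equality of the observed contribution, which is what Lemma \ref{lemma.1} delivers.

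Next, for LTRCART. The criterion is $D_{\text{parent}} - \{D_{\text{left}} + D_{\text{right}}\}$ with $D_h = \sum_{i \in h} d_i$ and $d_i$ a function of $\delta_i$, of the exposure $\Lambda_0(R_i)-\Lambda_0(L_i)$, and of $\hat\theta_h$. The key preliminary observation is that $\hat\theta_h = \left(\sum_{i\in S_h}\delta_i\right)\big/\left(\sum_{i\in S_h}\big[\Lambda_0(R_i)-\Lambda_0(L_i)\big]\right)$ is numerically the same whether a node's members are counted as original subjects or as pseudo-subjects: by the pseudo-subject definition only the last pseudo-subject of subject $i$ carries the event, so the numerator is unchanged, and the exposures telescope, $\sum_l\big[\Lambda_0(R^i_l)-\Lambda_0(L^i_l)\big] = \Lambda_0(Y_i)-\Lambda_0(0)$, so the denominator is unchanged. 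Combined with the fact, already used in Lemmas \ref{lemma.1} and \ref{lemma.2}, that the Nelson--Aalen/KM estimate of $\Lambda_0$ is itself unaltered by the subdivision (same $d_i$ and $n_i$ at every distinct event time), Lemma \ref{lemma.2} then gives that the per-subject log-likelihood contribution, and hence the per-subject deviance residual $d_i$, splits as the sum of the deviance residuals of its pseudo-subjects; summing over a node, $D_h$ is the same either way, and so is the deviance reduction, so the same splits are chosen.

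The main obstacle --- and the reason Lemmas \ref{lemma.1} and \ref{lemma.2} were proved first --- is precisely that the pseudo-subjects are manifestly dependent (we know in advance that all but the last have $\delta=0$), whereas both algorithms treat their inputs as independent. What makes the substitution legitimate is not that this dependence disappears, but that the two quantities through which a subject actually enters the split criteria --- the aggregated log-rank score and the log-likelihood/deviance, together with the nuisance estimators $\hat S$ and $\hat\Lambda_0$ on which they depend --- are all provably invariant under the subdivision. Once that is in hand, the remainder of the proof is bookkeeping: confirm that no other ingredient of either criterion distinguishes a subject from its pseudo-subjects (in particular that the split variable is constant across them until a time-varying split legitimately separates them), and then cite the two lemmas.
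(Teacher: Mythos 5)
Your proof is correct and follows essentially the same route as the paper's: invoke Lemma \ref{lemma.1} to show the LTRCIT linear statistic's observed value is unchanged, and Lemma \ref{lemma.2} to show the LTRCART full likelihood (hence deviance) is unchanged. Your additional checks --- that the covariate transformation $g_j$ is constant across a subject's pseudo-subjects, that $\hat\theta_h$ is invariant by telescoping, and the caveat about the permutation reference distribution --- are careful elaborations the paper leaves implicit, but they do not change the argument.
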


\begin{proof}
We will first prove the result for the LTRCIT tree. WLOG, we assume that the observation $(Y_i, \delta_i)$ is in group A, a subnode of a tree. Then the test statistic is $$ T_A = \sum^m_{i=1}U_i $$ with the contribution of observation $(Y_i, \delta_i)$ being log-rank score $U_i$. Let $\widetilde{U}^i_1$, $\widetilde{U}^i_2$ and $\widetilde{U}^i_3$ denote the log-rank scores of the three independent LTRC observations, respectively. Because they are the same as the pseudo-subjects and are independent, by Lemma \ref{lemma.1}, $U_i =\sum^3_{j=1}\widetilde{U}^i_j$. This means that replacing the observation $(Y_i, \delta_i)$ with the three independent LTRC observations does not change the test statistics $T_A$.

To prove the case for LTRCART, one can directly use Lemma \ref{lemma.2}. Since everything in this algorithm begins with the full likelihood function, one only needs to show that replacing $(Y_i, \delta_i)$ with the three independent LTRC observations does not change the full likelihood, which is obvious from Lemma \ref{lemma.2}. 
\end{proof}
 
Since the observation $(Y_i, \delta_i)$ is replaced by the three pseudo-subjects inside the LTRC tree algorithms, Theorem \ref{theory.1} implies that a time-varying covariates tree can be constructed by treating these pseudo-subjects as if they were independent. 

\section{Properties of the time-varying covariates survival trees}

In this section, the proposed LTRC trees are used to fit survival trees with time-varying covariates. Unlike in the time-independent covariates case, simulating survival time with time-varying covariates is non-trivial. Authors such as \cite{leemis1990variate}, \cite{zhou2001understanding}, \cite{sylvestre2008comparison}, \cite{austin2012generating} and \cite{hendry2014data} have proposed methods to simulate survival time with time-varying covariates under the Cox proportional hazard model. We will follow the method proposed by \cite{austin2012generating} in our simulation study in this section, primarily because of its convenient closed-form expression for simulating survival time.

In \cite{austin2012generating}, a single time-varying covariate $z(t)$ and other time-independent covariates $x$ are included in the proportional hazards model. Letting $\beta$ denote the vector of coefficients associated with $x$ and letting $\beta_z$ be the coefficient of $z(t)$, the hazard function is then

\[	h(t,x,z(t)) = h_0(t)e^{\beta x+\beta_z z(t)}.\] 

Three types of time-varying covariate are considered in \cite{austin2012generating}: a dichotomous time-varying covariate that can change value from untreated to treated at most once (e.g, organ transplant); a continuous time-varying covariate such as cumulative exposure to a fixed dose of radiation; and a dichotomous time-varying covariate that can move from untreated to treated and back to untreated (e.g, drug use status). Since in practice the covariates are measured at time intervals, it can be seen as changing values in time as a step function. This simplifies the model but is generally enough for practical purposes \citep{zhou2001understanding}.

Closed-form formulas are derived to simulate survival times from three commonly used distributions: Exponential, Weibull and Gompertz distribution, respectively, since these distributions share the proportional hazards assumption. The baseline hazard function in terms of parameters for each distribution can be described as follows:
\begin{itemize}
	\item $h_0(t) = \lambda$ for Exponential distribution with parameter $\lambda$
	 
	\item $h_0(t) = \lambda \nu t^{\nu-1}$ for Weibull distribution with scale parameter $\lambda$ and shape parameter $\nu$ 
	
	\item $h_0(t) = \lambda \exp(\alpha t)$ for Gompertz distribution with scale parameter $\lambda$ and shape parameter $\alpha$
\end{itemize}

The survival times are then generated as follows. 
\begin{enumerate}[label=(\Roman*)]

\item \textbf{Time-varying covariate with single change -- dichotomous type I}\\
Let $t_0$ denote the time point at which the time-varying covariate $z(t)$ changes from untreated ($Z=0$) to treated ($Z=1$). That is, $z(t) = 0$ for $t < t_0$ and $z(t) = 1$ for $t \geq t_0$. The survival time can be generated for each distribution using its corresponding inverse cumulative function. \cite{austin2012generating} gave closed-form formulas for generating survival times of the three distributions. Details can be found in the supplemental material.

 \item \textbf{Time-varying covariate with multiple changes -- dichotomous type II}\\
Assume all subjects are untreated at $t=0$. Let $t_1$ denote the first time at which $z(t)$ changes from $Z=0$ to $Z=1$; let $t_2$ denote the time at which $z(t)$ changes from $Z=1$ back to $Z=0$; and let $t_3$ denote the time at which $z(t)$ changes from $Z=0$ to $Z=1$ again. There are thus three possible switches between treatment status for each subject. The survival time for this setting is simulated using the closed-form formulas that can be found in the supplemental material.

\end{enumerate}

\subsection{Recovering the correct tree structure}
Similarly to Section \ref{Sec 3.1}, in this section we use simulations to evaluate the proposed trees' ability to recover the correct tree structure, with the covariates in this section possibly being time-varying. 

Five independent covariates $X_1,...,X_5$ are included in the regression, with $X_1,X_3$ being time-independent covariates and $X_2$, $X_4$, $X_5$ being time-varying covariates. $X_1$, $X_2$, $X_3$ are binary$\{0,1\}$, while $X_4$ is $U[0,1]$ and $X_5$ is ordinal takes value randomly from set $\{1, 2, 3, 4, 5\}$. The true model is 

\begin{equation}
h(t,x,z(t)) = h_0(t)e^{\beta x + \beta_z z(t)} = h_0(t)e^{\beta I_{\{X_1 = 1\}} + \beta_z I_{\{X_2 = 1\}}}
\end{equation}
where $h_0(t)$ depends on the specific distribution. The parameters in each distribution are given in Table \ref{TV-parameter}, where the Exponential distribution has constant hazard rate, the Weibull distribution has decreasing hazard rate with time and the Gompertz distribution has increasing hazard rate with time.  

\begin{table}[t]
\center
\caption{\label{TV-parameter} Parameters and Coefficients for each distribution } 
\begin{tabular}{lcccccccc}
\hline
 		 && $\beta$ && $\beta_z$ && Scale $\lambda$ && Shape $\alpha$/$\nu$ \\ \hline
Exponential && $0.8$ &&   $1.4$  &&   $0.1$    &&   $-$   \\
    Weibull && $0.9$ &&   $1.6$  &&   $0.3$    &&   $0.8$    \\
   Gompertz && $1.2$ &&   $2.0$  &&   $0.2$    &&   $0.1$  \\ \hline
\end{tabular}
\end{table}

Note that only $X_1$ and $X_2$ determine the actual survival distributions, so the true underlying data structure can be presented by a tree in two different ways, as shown in Figure \ref{fig-TV-struct}.

\begin{figure}[t]
  \begin{center}
   \captionsetup{justification=centering}
      \includegraphics[width=6 in,height =3 in]{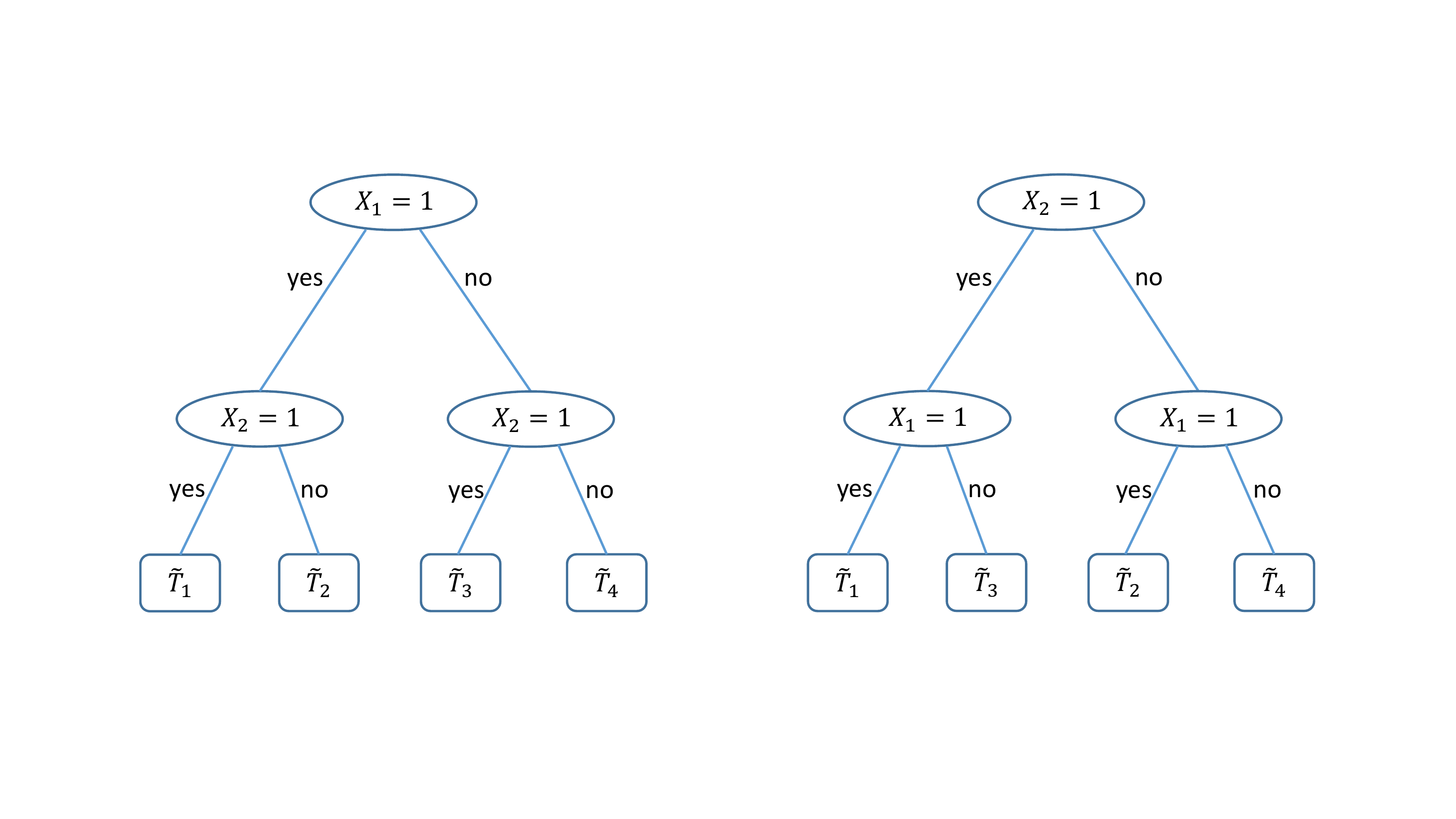}
   \end{center}
     \caption{\label{fig-TV-struct} Underlying data structure with binary split variables} 
 \end{figure} 

The binary variable $X_2$ is the time-varying covariate $z(t)$ in \cite{austin2012generating}, which changes value at $t_0$ for dichotomous type I and changes value at $t_1$, $t_2$, $t_3$ for dichotomous type II. In this way the value of $X_2$ is completely determined by time $t_0$ or time $(t_1, t_2, t_3)$. For each subject, the values of covariates and survival time are generated as follows:

\begin{enumerate}
	\item Randomly generate $X_1$ from $\{0,1\}$ and $\mathit{u} \sim U(0,1)$.
	\item Randomly generate $t_0$ or $(t_1, t_2, t_3)$ from $U(0.6,6)$.
	\item Calculate survival time $T$ from the closed-form formulas based on the values of $X_1$, $\mathit{u}$ and $t_0$ or $(t_1, t_2, t_3)$. 
	\item Split the subject at time $t_0$ or $\{t_1, t_2, t_3\}$ to create pseudo-subjects.
	\item Randomly pick the values of $X_3$, $X_4$, $X_5$ on each pseudo-subject.
	\item Generate censoring time $C \sim \exp (\lambda_D)$. If $C < T$, then this subject is censored with final observation time $C$; otherwise, it is uncensored with observed event time $T$. 
\end{enumerate}

Three levels of censoring rate, $0\%$, $20\%$ and $50\%$ are considered for each distribution. The censoring parameter $\lambda_D$ is chosen for each distribution to make sure these censoring rates hold. To test the effect of sample size on the performance of the proposed trees, three different numbers of subjects, $N = 100, 300, 500$ are tested in the simulations. For each setting, we run 1,000 simulation trials and report the percentage of the time the correct tree structure is recovered by the proposed LTRC trees, as well as the percentage of the time $X_1$ and $X_2$ are identified as the risk factors (splitting variables). Tables \ref{tab-typeI} and \ref{tab-typeII} show the results for dichotomous type I and dichotomous type II, respectively.

\begin{table}[t]
\center
\caption{\label{tab-typeI} Percentage of times correct tree structure recovered--Type I} 
\hspace*{-2cm}
\begin{threeparttable}
\scalebox{0.75}{%
\begin{tabular}{ccccccccccccccccccccccccc}
\hline
Censoring rate && \multicolumn{7}{c}{Exponential} && \multicolumn{7}{c}{Weibull} &&  \multicolumn{7}{c}{Gompertz} \\  \cline{3-9} \cline{11-17} \cline{19-25} 
$\%$ && \multicolumn{3}{c}{\footnotesize{LTRCIT}}&&\multicolumn{3}{c}{\footnotesize{LTRCART}} && \multicolumn{3}{c}{\footnotesize{LTRCIT}}&&\multicolumn{3}{c}{\footnotesize{LTRCART}} && \multicolumn{3}{c}{\footnotesize{LTRCIT}}&&\multicolumn{3}{c}{\footnotesize{LTRCART}} \\  \cline{3-5}  \cline{7-9}  \cline{11-13}  \cline{15-17}  \cline{19-21} \cline{23-25}
$N=100$ &&  $X_1$&$X_2$&S  &&  $X_1$&$X_2$&S && $X_1$&$X_2$&S  && $X_1$&$X_2$&S   &&  $X_1$&$X_2$&S && $X_1$&$X_2$&S \\ \hline

$0$  &&  $73.5$&$80.7$&$4.8$  &&  $69.4$&$81.9$&$9.4$ && $85.4$&$75.5$&$1.7$ && $81.6$&$92.1$&$13.9$ && $98.2$&$98.2$&$3.7$  &&  $97.2$&$100.0$&$8.2$ \\
$20$ &&  $59.6$&$74.4$&$1.9$  &&  $60.2$&$72.7$&$4.4$ && $77.8$&$62.2$&$0.9$ && $74.0$&$83.0$&$5.4$ && $95.7$&$92.3$&$0.9$  &&  $93.2$&$99.0$&$3.6$ \\
$50$ &&  $41.5$&$60.6$&$0.7$  &&  $38.4$&$51.4$&$1.2$ && $58.5$&$25.6$&$0.1$ && $50.3$&$39.0$&$0.0$ && $86.7$&$55.3$&$0.0$  &&  $81.4$&$72.4$&$0.2$ \\ \hline
$N=300$ &&  $$&$$&$$  &&  $$&$$&$$  &&  $$&$$&$$  && $$&$$&$$   &&  $$&$$&$$  &&  $$&$$&$$ \\ \hline
$0$  &&  $100$&$100$&$84.4$  &&  $99.5$&$100$&$66.7$ && $100$&$100$&$68.8$ && $99.9$&$100$&$70.1$ && $100$&$100$&$71.2$  &&  $100$&$100$&$76.0$ \\
$20$ && $99.7$&$100$&$75.8$  &&  $97.1$&$99.9$&$53.3$&& $100$&$100$&$56.4$ && $99.2$&$100$&$57.4$ && $100$&$100$&$58.2$  &&  $100$&$100$&$61.4$ \\
$50$ &&$95.6$&$99.6$&$44.9$  &&  $84.6$&$96.9$&$26.3$&& $99.4$&$90.1$&$19.3$ && $97.3$&$92.9$&$17.5$ && $100$&$100$&$45.4$  &&  $99.9$&$100$&$30.1$ \\ \hline
$N=500$ &&  $$&$$&$$  &&  $$&$$&$$  &&  $$&$$&$$  && $$&$$&$$   &&  $$&$$&$$  &&  $$&$$&$$ \\ \hline
$0$  &&  $100$&$100$&$89.0$  &&  $100$&$100$&$85.7$ && $100$&$100$&$86.9$ && $100$&$100$&$87.6$ && $100$&$100$&$89.1$  &&  $100$&$100$&$92.2$ \\
$20$ &&  $100$&$100$&$89.1$  &&  $100$&$100$&$83.1$ && $100$&$100$&$82.6$ && $100$&$100$&$79.5$ && $100$&$100$&$85.2$  &&  $100$&$100$&$89.2$ \\
$50$ &&  $99.9$&$100$&$81.8$ &&  $98.0$&$99.9$&$62.0$ && $100$&$99.9$&$61.3$ && $99.9$&$99.5$&$47.8$ && $100$&$100$&$75.7$  &&  $100$&$100$&$61.6$ \\ \hline 
\end{tabular} 
\hspace*{-2cm}
} 
 \begin{tablenotes}[para,flushleft]
\footnotesize {``S'' means correct tree structure as in Figure \ref{fig-TV-struct}}.
  \end{tablenotes}
\end{threeparttable}
\end{table}
\noindent

\begin{table}[t]
\center
\caption{\label{tab-typeII} Percentage of times correct tree structure recovered--Type II} 
\begin{threeparttable}
\scalebox{0.75}{%
\hspace*{-2cm}
\begin{tabular}{ccccccccccccccccccccccccc}
\hline
Censoring rate && \multicolumn{7}{c}{Exponential} && \multicolumn{7}{c}{Weibull} &&  \multicolumn{7}{c}{Gompertz} \\  \cline{3-9} \cline{11-17} \cline{19-25} 
$\%$ && \multicolumn{3}{c}{\footnotesize{LTRCIT}}&&\multicolumn{3}{c}{\footnotesize{LTRCART}} && \multicolumn{3}{c}{\footnotesize{LTRCIT}}&&\multicolumn{3}{c}{\footnotesize{LTRCART}} && \multicolumn{3}{c}{\footnotesize{LTRCIT}}&&\multicolumn{3}{c}{\footnotesize{LTRCART}} \\  \cline{3-5}  \cline{7-9}  \cline{11-13}  \cline{15-17}  \cline{19-21} \cline{23-25}
$N=100$ &&  $X_1$&$X_2$&S  &&  $X_1$&$X_2$&S && $X_1$&$X_2$&S  && $X_1$&$X_2$&S   &&  $X_1$&$X_2$&S && $X_1$&$X_2$&S \\ \hline

$0$  && $74.3$&$94.3$&$9.8$  &&  $72.8$&$94.7$&$11.2$ &&  $83.8$&$94.6$&$14.3$  && $79.9$&$97.6$&$18.9$&& $97.6$&$99.9$&$35.7$  &&  $94.6$&$100$&$33.7$ \\
$20$ && $62.2$&$92.8$&$4.6$  &&  $61.9$&$90.9$&$5.1$  &&  $75.6$&$85.0$&$4.6$  && $71.5$&$92.5$&$8.4$  && $94.6$&$98.9$&$23.3$  &&  $91.3$&$100$&$22.7$ \\
$50$ && $44.0$&$75.6$&$1.0$  &&  $43.9$&$71.9$&$2.1$  &&  $56.7$&$42.4$&$0.3$  && $51.2$&$57.1$&$1.3$  && $75.2$&$77.5$&$1.8$  &&  $72.2$&$87.4$&$3.6$ \\ \hline
$N=300$ &&  $$&$$&$$  &&  $$&$$&$$  &&  $$&$$&$$  && $$&$$&$$   &&  $$&$$&$$  &&  $$&$$&$$ \\ \hline
$0$  && $100$&$100$&$79.4$  &&  $99.5$&$100$&$61.5$  &&  $100$&$100$&$86.3$  && $99.8$&$100$&$82.8$   && $100$&$100$&$89.3$  &&  $100$&$100$&$90.6$ \\
$20$ && $100$&$100$&$76.5$  &&  $98.3$&$100$&$54.4$  &&  $100$&$100$&$74.8$  && $99.8$&$100$&$63.4$   && $100$&$100$&$88.8$  &&  $100$&$100$&$87.3$ \\
$50$ && $96.0$&$100$&$55.2$ &&  $84.6$&$99.9$&$30.0$ &&  $99.4$&$98.6$&$43.3$  && $96.2$&$98.9$&$29.7$&& $100$&$100$&$66.5$  &&  $99.9$&$100$&$49.4$ \\  \hline
$N=500$ &&  $$&$$&$$  &&  $$&$$&$$  &&  $$&$$&$$  && $$&$$&$$   &&  $$&$$&$$  &&  $$&$$&$$ \\ \hline
$0$  && $100$&$100$&$87.7$  &&  $100$&$100$&$84.4$  &&  $100$&$100$&$89.3$  && $100$&$100$&$91.7$ && $100$&$100$&$90.2$  &&  $100$&$100$&$93.2$ \\
$20$ && $100$&$100$&$89.7$  &&  $99.9$&$100$&$82.2$ &&  $100$&$100$&$89.9$  && $100$&$100$&$88.7$ && $100$&$100$&$88.9$  &&  $100$&$100$&$91.4$ \\ 
$50$ && $99.9$&$100$&$82.7$ &&  $98.9$&$100$&$64.0$ &&  $100$&$100$&$75.5$  && $99.9$&$100$&$56.7$&& $100$&$100$&$87.4$  &&  $100$&$100$&$82.2$ \\ \hline 
\end{tabular} 
\hspace*{-2cm}
} 
 \begin{tablenotes}[para,flushleft]
\footnotesize {``S'' means correct tree structure as in Figure \ref{fig-TV-struct}}.
  \end{tablenotes}
\end{threeparttable}

\end{table}
\noindent

It is clear from Tables \ref{tab-typeI} and \ref{tab-typeII} that both the sample size and the censoring rate have a strong impact on the trees' ability to recover the correct tree structure and identify the risk factors. A small sample size makes it difficult for trees to recover the tree structure and identify the risk factors. While a high censoring rate has a quite consistent effect of deteriorating the trees' ability to recover the correct tree structure, it only affects the trees' ability to identify the risk factors when the sample size is small. With a reasonably large sample size, the proposed trees apparently identify the key factors by splitting on them, and their performance is resistant to high censoring rate. Also, larger sample sizes reduce the impact of high censoring on the trees' ability to recover the correct tree structure.   

Comparing the results in the two tables when $N=100$, one can find that the time-varying covariate $X_2$ is more frequently identified as a risk factor by the trees in the dichotomous type II than the dichotomous type I. This may due to the fact that time-varying covariate $X_2$ changes value more frequently in dichotomous type II, and therefore its effect is easier for the trees to capture. It also leads to higher recovery rate of the entire tree structure. 

It is clear that both proposed trees perform well with reasonably large sample size ($N \geq 300$).  LTRCART works better when the sample size is small and the censoring rate is low, while LTRCIT outperforms LTRCART when the sample size is large and the censoring rate is high. The advantage of LTRCIT in the high censoring case is consistent with the previous LTRC results. The advantage of LTRCART in the small sample situation can be explained by the fact that its proportional hazards assumption is satisfied in this simulation. Therefore, it performs better when the trees are more unstable due to small sample size.

\subsubsection{Continuous time-varying covariate}
So far, we have only considered the case where the time-varying split variable $X_2$ is binary. However, in practice time-varying covariates usually take on more possible values than two. Therefore, in this section we consider the case that the split variable $X_2$ is continuous.

To be specific, we randomly generate the time-varying covariate $X_2$ from $U(0,10)$ at each of the time points $\{t_0, t_1, t_2, t_3\}$. Everything else is kept the same as in the previous setup, including the generation of the other covariates, the parameters in Table \ref{TV-parameter} and the censoring structure. 

The true model here is 

\begin{equation}
h(t,x,z(t)) = h_0(t)e^{\beta x + \beta_z z(t)} = h_0(t)e^{\beta I_{\{X_1 = 1\}} + \beta_z I_{\{X_2 > 5\}}},
\end{equation}
with the corresponding underlying structure of data shown in Figure \ref{fig:TV-struct-ctn}. We run 1,000 simulation trials and report the percentage of times the correct tree structure is recovered by the proposed LTRC trees. Table \ref{tab-typeI-ctn} shows the results for the situation when $X_2$ changes once and Table \ref{tab-typeII-ctn} gives the result for the situation when $X_2$ changes multiple times.

\begin{figure}[t]
  \begin{center}
   \captionsetup{justification=centering}
      \includegraphics[width=6 in,height =3 in]{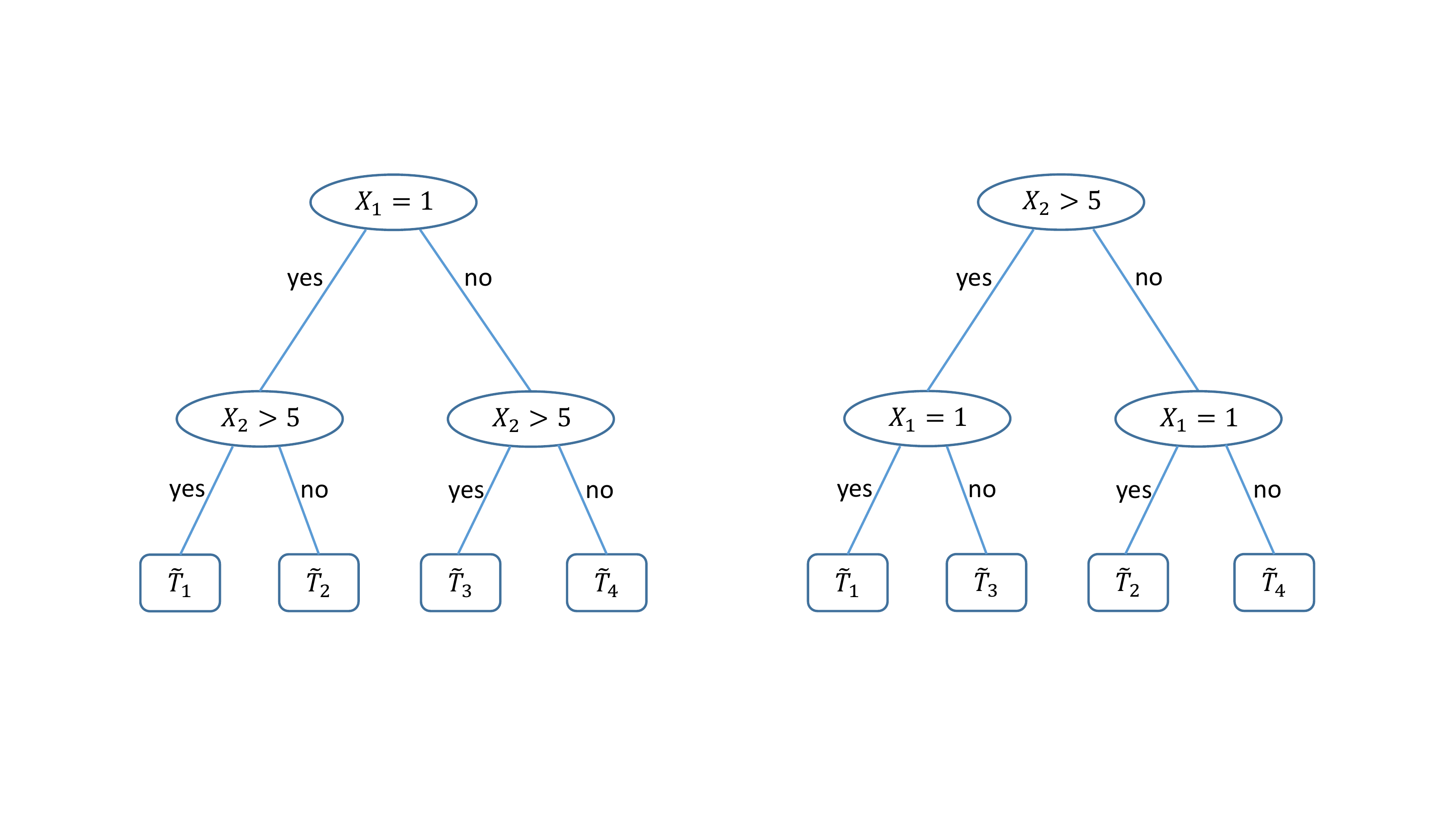}
   \end{center}
     \caption{\label{fig:TV-struct-ctn} Underlying data structure with continuous $X_2$} 
 \end{figure}

\begin{table}[t]
\center
\caption{\label{tab-typeI-ctn} Percentage of times correct tree structure recovered--Type I} 
\hspace*{-2cm}
\begin{threeparttable}
\scalebox{0.75}{%
\begin{tabular}{ccccccccccccccccccccccccc}
\hline
Censoring rate && \multicolumn{7}{c}{Exponential} && \multicolumn{7}{c}{Weibull} &&  \multicolumn{7}{c}{Gompertz} \\  \cline{3-9} \cline{11-17} \cline{19-25} 
$\%$ && \multicolumn{3}{c}{\footnotesize{LTRCIT}}&&\multicolumn{3}{c}{\footnotesize{LTRCART}} && \multicolumn{3}{c}{\footnotesize{LTRCIT}}&&\multicolumn{3}{c}{\footnotesize{LTRCART}} && \multicolumn{3}{c}{\footnotesize{LTRCIT}}&&\multicolumn{3}{c}{\footnotesize{LTRCART}} \\  \cline{3-5}  \cline{7-9}  \cline{11-13}  \cline{15-17}  \cline{19-21} \cline{23-25}
$N=100$ &&  $X_1$&$X_2$&S  &&  $X_1$&$X_2$&S && $X_1$&$X_2$&S  && $X_1$&$X_2$&S   &&  $X_1$&$X_2$&S && $X_1$&$X_2$&S \\ \hline

$0$  && $69.7$&$60.7$&$1.7$  &&  $52.1$&$73.5$&$5.2$  &&  $80.1$&$50.8$&$1.2$  && $67.8$&$86.2$&$7.7$   &&  $97.2$&$86.0$&$4.3$  &&  $94.6$&$99.9$&$7.1$ \\
$20$ && $58.4$&$55.8$&$1.1$  &&  $40.4$&$67.3$&$3.3$  &&  $71.2$&$38.5$&$0.6$  && $58.7$&$77.1$&$5.0$   &&  $94.6$&$70.2$&$1.0$  &&  $92.2$&$98.5$&$3.6$ \\
$50$ && $37.5$&$40.6$&$0.2$  &&  $24.3$&$52.2$&$1.1$  &&  $55.8$&$13.9$&$0.0$  && $43.6$&$46.9$&$0.4$   &&  $81.3$&$32.3$&$0.1$  &&  $73.5$&$75.3$&$0.7$ \\ \hline
$N=300$ &&  $$&$$&$$  &&  $$&$$&$$  &&  $$&$$&$$  && $$&$$&$$   &&  $$&$$&$$  &&  $$&$$&$$ \\ \hline
$0$  && $100$&$99.6$&$70.2$  &&  $99.2$&$99.9$&$56.7$  &&  $100$&$99.6$&$50.3$  && $99.7$&$100$&$59.9$ &&  $100$&$100$&$53.8$  &&  $100$&$100$&$66.3$ \\
$20$ && $99.6$&$99.2$&$59.2$ &&  $94.4$&$99.8$&$45.9$  &&  $100$&$96.8$&$35.2$  && $99.2$&$100$&$44.0$ &&  $100$&$100$&$39.6$  &&  $100$&$100$&$48.5$ \\ 
$50$ && $93.6$&$93.2$&$29.3$ &&  $74.6$&$95.6$&$21.7$  &&  $99.1$&$53.1$&$3.7$  && $94.9$&$91.6$&$16.4$&&  $100$&$90.7$&$16.6$  &&  $99.8$&$99.9$&$24.8$ \\  \hline
$N=500$ &&  $$&$$&$$  &&  $$&$$&$$  &&  $$&$$&$$  && $$&$$&$$   &&  $$&$$&$$  &&  $$&$$&$$ \\ \hline
$0$  && $100$&$100$&$81.2$  && $100$&$100$&$82.6$  &&  $100$&$100$&$78.9$  && $100$&$100$&$83.8$ && $100$&$100$&$81.5$ && $100$&$100$&$85.9$ \\
$20$ && $100$&$100$&$81.9$  && $100$&$100$&$78.3$  &&  $100$&$99.9$&$70.5$ && $100$&$100$&$70.5$ && $100$&$100$&$73.1$ && $100$&$100$&$80.6$ \\ 
$50$ && $99.9$&$99.7$&$68.7$&& $95.7$&$99.8$&$51.1$&&  $100$&$87.6$&$24.3$ && $99.9$&$99.2$&$45.6$&&$100$&$100$&$46.3$ && $100$&$100$&$49.7$ \\ \hline 
\end{tabular} 
\hspace*{-2cm}
} \\
 \begin{tablenotes}[para,flushleft]
\footnotesize {``S'' means correct tree structure as in Figure \ref{fig:TV-struct-ctn}}.
  \end{tablenotes}
\end{threeparttable}
\end{table}

\begin{table}[t]
\center
\caption{\label{tab-typeII-ctn} Percentage of times correct tree structure recovered--Type II} 
\begin{threeparttable}
\scalebox{0.75}{%
\hspace*{-2cm}
\begin{tabular}{ccccccccccccccccccccccccc}
\hline
Censoring rate && \multicolumn{7}{c}{Exponential} && \multicolumn{7}{c}{Weibull} &&  \multicolumn{7}{c}{Gompertz} \\  \cline{3-9} \cline{11-17} \cline{19-25} 
$\%$ && \multicolumn{3}{c}{\footnotesize{LTRCIT}}&&\multicolumn{3}{c}{\footnotesize{LTRCART}} && \multicolumn{3}{c}{\footnotesize{LTRCIT}}&&\multicolumn{3}{c}{\footnotesize{LTRCART}} && \multicolumn{3}{c}{\footnotesize{LTRCIT}}&&\multicolumn{3}{c}{\footnotesize{LTRCART}} \\  \cline{3-5}  \cline{7-9}  \cline{11-13}  \cline{15-17}  \cline{19-21} \cline{23-25}
$N=100$ &&  $X_1$&$X_2$&S  &&  $X_1$&$X_2$&S && $X_1$&$X_2$&S  && $X_1$&$X_2$&S   &&  $X_1$&$X_2$&S && $X_1$&$X_2$&S \\ \hline

$0$  && $73.4$&$82.5$&$5.6$  &&  $59.2$&$89.7$&$7.6$  &&  $79.7$&$81.4$&$6.8$ && $66.4$&$95.9$&$11.2$&&  $97.3$&$97.7$&$25.1$  &&  $91.3$&$100$&$24.8$ \\
$20$ && $61.4$&$74.5$&$2.8$  &&  $44.9$&$81.9$&$4.9$  &&  $71.6$&$63.8$&$3.0$ && $56.3$&$87.1$&$7.4$ &&  $93.8$&$91.3$&$13.5$  &&  $85.2$&$99.6$&$13.2$ \\
$50$ && $43.2$&$57.1$&$0.3$  &&  $27.6$&$64.2$&$2.0$  &&  $56.0$&$23.2$&$0.3$ && $40.3$&$56.2$&$1.4$ &&  $74.9$&$52.7$&$0.9$  &&  $64.6$&$86.3$&$3.1$ \\ \hline
$N=300$ &&  $$&$$&$$  &&  $$&$$&$$  &&  $$&$$&$$  && $$&$$&$$   &&  $$&$$&$$  &&  $$&$$&$$ \\ \hline
$0$  && $100$&$100$&$76.5$  &&  $99.8$&$100$&$51.2$  && $99.9$&$100$&$78.1$  && $99.6$&$100$&$75.4$ && $100$&$100$&$85.8$  &&  $100$&$100$&$87.6$ \\
$20$ && $99.9$&$100$&$69.5$ &&  $96.5$&$100$&$42.7$  && $100$&$99.6$&$66.0$  && $99.4$&$100$&$56.2$ && $100$&$100$&$81.8$  &&  $100$&$100$&$78.3$ \\ 
$50$ && $94.5$&$98.6$&$40.9$&&  $77.5$&$99.3$&$25.6$ && $99.2$&$82.1$&$19.5$ && $91.9$&$97.9$&$23.1$&& $100$&$99.4$&$52.8$  &&  $99.7$&$100$&$42.9$ \\  \hline
$N=500$ &&  $$&$$&$$  &&  $$&$$&$$  &&  $$&$$&$$  && $$&$$&$$   &&  $$&$$&$$  &&  $$&$$&$$ \\ \hline
$0$  && $100$&$100$&$84.8$ && $100$&$100$&$81.8$  &&  $100$&$100$&$88.4$  && $100$&$100$&$92.9$   &&  $100$&$100$&$85.4$  &&  $100$&$100$&$92.4$ \\
$20$ && $100$&$100$&$83.0$ && $99.9$&$100$&$74.2$ &&  $100$&$100$&$82.0$  && $100$&$100$&$83.7$   &&  $100$&$100$&$82.5$  &&  $100$&$100$&$89.3$ \\ 
$50$ && $99.9$&$100$&$78.2$&& $97.3$&$100$&$61.1$ &&  $99.9$&$98.0$&$57.9$&& $99.4$&$99.6$&$49.0$ &&  $100$&$100$&$82.1$  &&  $100$&$100$&$74.6$ \\ \hline 
\end{tabular} 
\hspace*{-2cm}
} \\
\begin{tablenotes}[para,flushleft]
\footnotesize {``S'' means correct tree structure as in Figure \ref{fig:TV-struct-ctn}}.
 \end{tablenotes}
\end{threeparttable}

\end{table}

Comparing to the results in Table \ref{tab-typeI} and Table \ref{tab-typeII}, we can see that the trees recover the correct tree structure less often when the time-varying covariate $X_2$ is continuous compared to when it is binary. This is understandable, since the binary variable gives a clear binary cut for splitting, and hence the correct split is easier to be identified by the tree than for the continuous variable. Nevertheless, both proposed trees still perform well given reasonably large sample size.

As is true in the binary $X_2$ case, we can see the proposed trees perform better when $X_2$ can change multiple times compared to the setting where $X_2$ can change only once. This is especially helpful in the continuous time-varying covariate situation where more observations are needed for the tree to identify the correct split than in the binary time-varying covariate case.  

\subsection{Prediction performance}
We test the prediction performance of the proposed methods in this section using the integrated Brier score (IBS). The training set is generated as in Section 6.1. From Figure \ref{fig-TV-struct} and Figure \ref{fig:TV-struct-ctn}, we can see that there are total four distinct survival distributions as denoted by the four terminal nodes in Figure \ref{fig-TV-struct} and Figure \ref{fig:TV-struct-ctn}. Therefore, we generate the test set from the four survival distributions, and compare it with the predicted survival distribution from the fitted trees.

More specifically, the test set contains five covariates whose values are generated according to the description in Section 6.1. The survival time $T$ is generated according to each distribution as follows: 
\begin{itemize}
	\item Exponential \hspace{0.05in} $T = -\frac{\log(u)}{\lambda \exp(\vartheta)}$
	\item Weibull     \hspace{0.1in} $T = \left(-\frac{\log(u)}{\lambda \exp(\vartheta)}\right)^{1/\nu}$
	\item Gompertz    \hspace{0.05in} $T =\frac{1}{\alpha} \log \left(1- \frac{\alpha \log(u)}{\lambda \exp(\vartheta)}\right)$
\end{itemize}
where $u \sim$ $U(0,1)$. The value of $\vartheta$ is determined by the corresponding terminal node in Figure \ref{fig-TV-struct} and Figure \ref{fig:TV-struct-ctn}:

\begin{itemize}
	\item Node $\widetilde{T}_1$: $\vartheta = \beta + \beta_z$
	\item Node $\widetilde{T}_2$: $\vartheta = \beta$
	\item Node $\widetilde{T}_3$: $\vartheta = \beta_z$
	\item Node $\widetilde{T}_4$: $\vartheta = 0$
\end{itemize}

Note that the covariates in the test set are time-independent, instead of being time-varying as in the training set. Nevertheless, they contain the same four survival distributions. All of the settings in Section 6.1, binary and continuous time-varying covariate $X_2$, coupled with $X_2$ changes value only once and multiple times, are tested in this section. Note that for node $\widetilde{T}_2$ and $\widetilde{T}_4$ in Figure \ref{fig-TV-struct} and \ref{fig:TV-struct-ctn}, the survival time $T$ is right-censored at time $6$,  while for node $\widetilde{T}_1$ and $\widetilde{T}_3$ the survival time is left truncated at time $0.6$. This is because $t_0$ is generated from $U(0.6,6)$. 

For each setting, the sample size $N$ in the test set is set equal to the sample size in the training set, which takes values from $\{100, 300, 500\}$. Five hundred simulation runs are performed for each setting and the resulting IBS values are shown in the boxplots. The two proposed trees are compared to each other, as well as to the Cox proportional hazards model. The signed-rank test is used in each setting to compare the methods. Figures \ref{fig:binary} and \ref{fig:continuous} give side-by-side boxplots with binary and continuous $X_2$, respectively, with $X_2$ changing value multiple times and sample size $N=300$. In each figure, the first row represents the $0\%$ censoring rate case, the second row represents the $20\%$ censoring rate case and the third row represents the $50\%$ censoring rate case. Complete results can be found in the supplemental material. The results are as follows.

\begin{itemize}
	\item \textbf{Binary time-varying covariate $X_2$} \\
	Note that both the tree model and the proportional hazards assumption are satisfied when $X_2$ is binary (Figure \ref{fig-TV-struct} and equation $(9)$). The results show that both trees perform significantly better than the Cox model in the Weibull and Gompertz distribution cases, while the Cox model performs better for the Exponential distribution. This is true regardless of censoring rate and sample size. LTRCIT generally performs better than LTRCART. 
	
	Since both the tree and the Cox model are the true model in this case, these results indicate that the proposed trees have comparable prediction performance compared to the Cox model. 
	     
	\item \textbf{Continuous time-varying covariate $X_2$} \\
	The results in the continuous time-varying covariate $X_2$ case are broadly similar to those in the binary case. The most notable difference is that trees have less of an advantage over the Cox model for the Weibull and Gompertz distributions, which comes as a surprise since the Cox model no longer holds (the log hazard is not linearly dependent on $X_2$, while the proportional hazards assumption still hold). The reason is that trees recover the correct tree structure less often in the continuous $X_2$ case than in the binary case, and therefore its prediction performance is undermined. 
	
	In contrast, the performance of the Cox model is not affected by the continuous $X_2$. This may due to that the effect of $X_2$ on the hazard is simple and monotonic, so the Cox model can approximate the data well. Note that the proportional hazards assumption still holds here. Nevertheless, trees still perform significantly better than the Cox model with reasonably large sample size. 
\end{itemize}

As usual, trees work better on data with large sample size than on data with small sample size. Comparing the results for $N=500$ to those for $N=100$, one can easily see that trees have relatively better performance when the sample size is large. Because trees are relatively unstable compared to the Cox model, prediction accuracy is dominated by larger variance when the sample size is small. However, when the sample size is large, stability is no longer an issue, in which case trees become more favorable because they are more flexible.

\begin{sidewaysfigure}
\includegraphics[width=\textwidth]{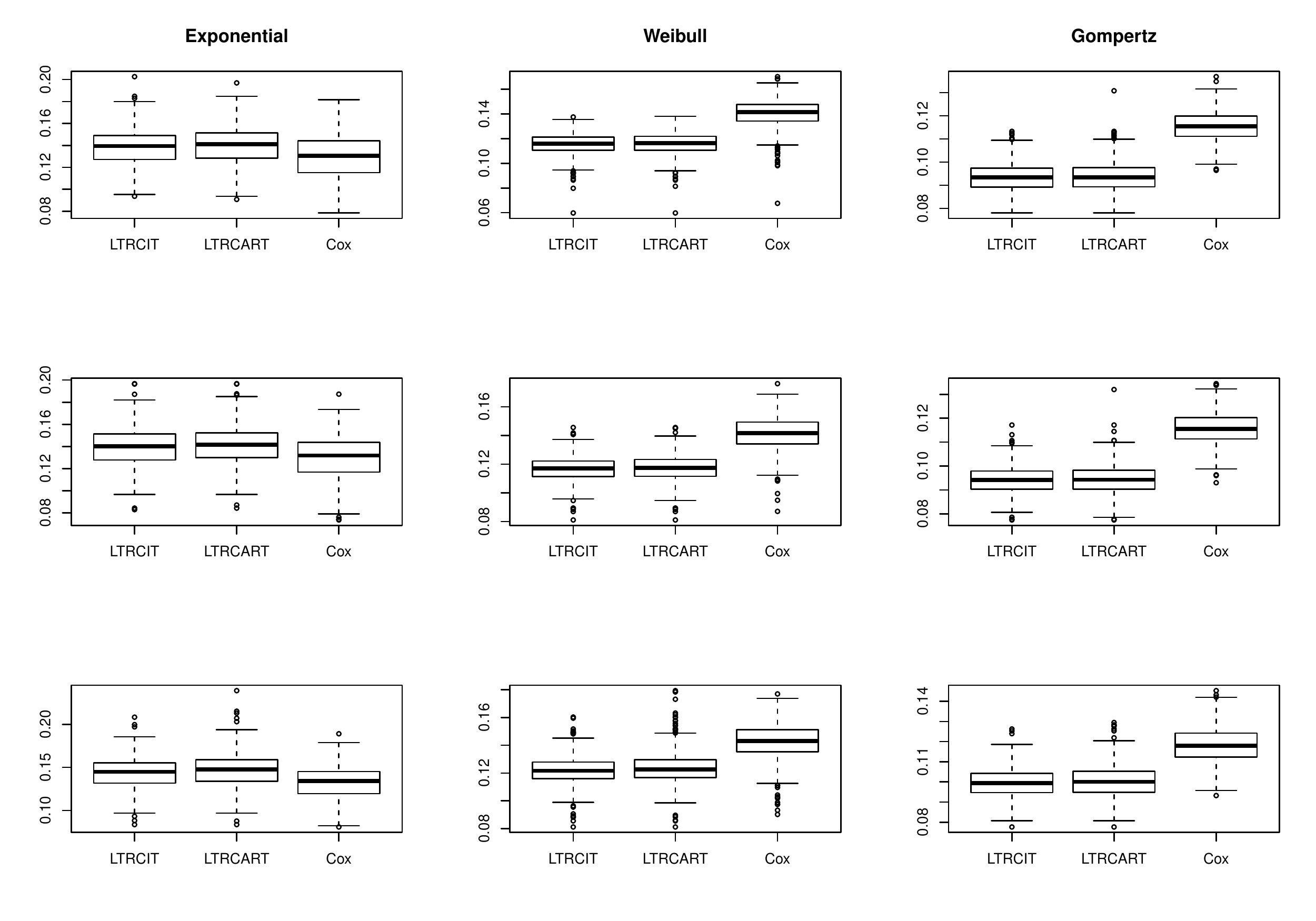}
\caption{ \label{fig:binary} IBS boxplots with binary $X_2$, dichotomous type II and $N=300$. The first to the third row represents $0\%$,  $20\%$ and $50\%$ censoring rate, respectively.}
\end{sidewaysfigure}

\begin{sidewaysfigure}
\includegraphics[width=\textwidth]{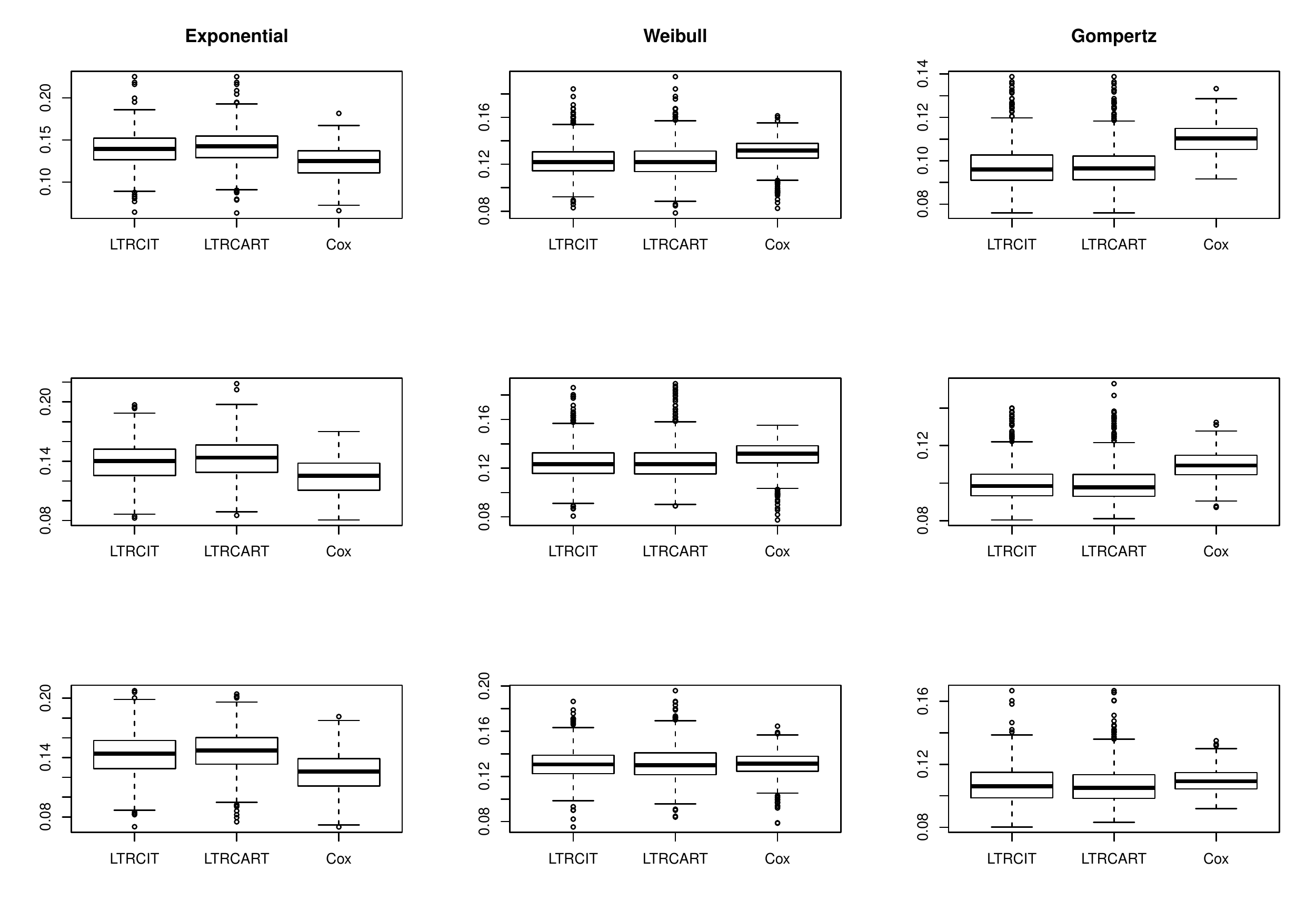}
\caption{\label{fig:continuous}IBS boxplots with continuous $X_2$ whose value changes multiple times and $N=300$. The first to the third row represents $0\%$,  $20\%$ and $50\%$ censoring rate, respectively.}
\end{sidewaysfigure}

\section{Real data applications}
We will test the proposed time-varying covariates trees on two real data examples. 

\subsection{Bone Marrow Transplants Data}
	The Bone Marrow Transplants Data in the \texttt{R} package \texttt{KMsurv} is described in \cite{klein2003survival}. 
Bone marrow transplants (BMT) are a standard treatment for acute leukemia.
The interest is in examining the relationship between disease-free survival time after the transplantation and a set 
of factors for patients given a bone marrow transplant. Besides those time-independent 
covariates measured at time of transplant, such as disease group and the French-American-British (FAB)
classification based on standard morphological criteria, there are three intermediate events that
occur during the transplant recovery process that may affect the disease-free survival
time of a patient. These are the development of acute graft-versus-host
disease (aGVHD), the development of chronic graft-versus-host disease (cGVHD) and the 
return of the patient's platelet count to a self-sustaining level. They serve as the 
time-varying covariates in our analysis. The risk factors considered are thus:
\begin{itemize}
	\item Group: Disease Group $1$-ALL, $2$-AML Low Risk, $3$-AML High Risk
	\item FAB: $1$-FAB grade $4$ or $5$, $0$-Otherwise
	\item aGVHD: $1$-Developed acute GVHD, $0$-Otherwise
	\item cGVHD: $1$-Developed chronic GVHD, $0$-Otherwise
	\item Platelet: $1$-Platelets returned to normal, $0$-Otherwise
\end{itemize}

\begin{figure}[t]
\centering
\begin{minipage}{\linewidth}
  \begin{minipage}{0.45\linewidth}
    \includegraphics[width=\linewidth]{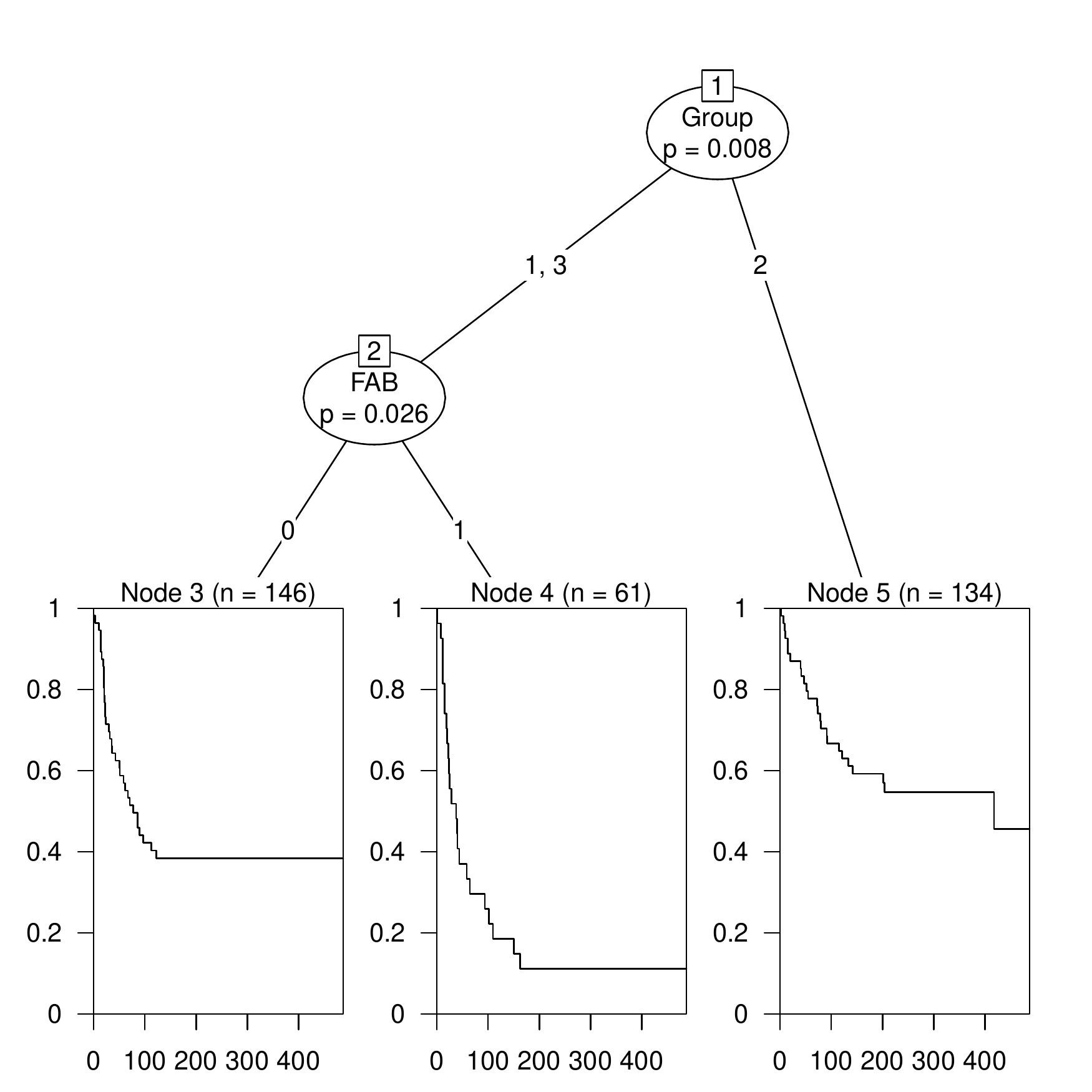}
  \end{minipage}
  \hspace{0.05in}
  \begin{minipage}{0.45\linewidth}
    \includegraphics[width=\linewidth]{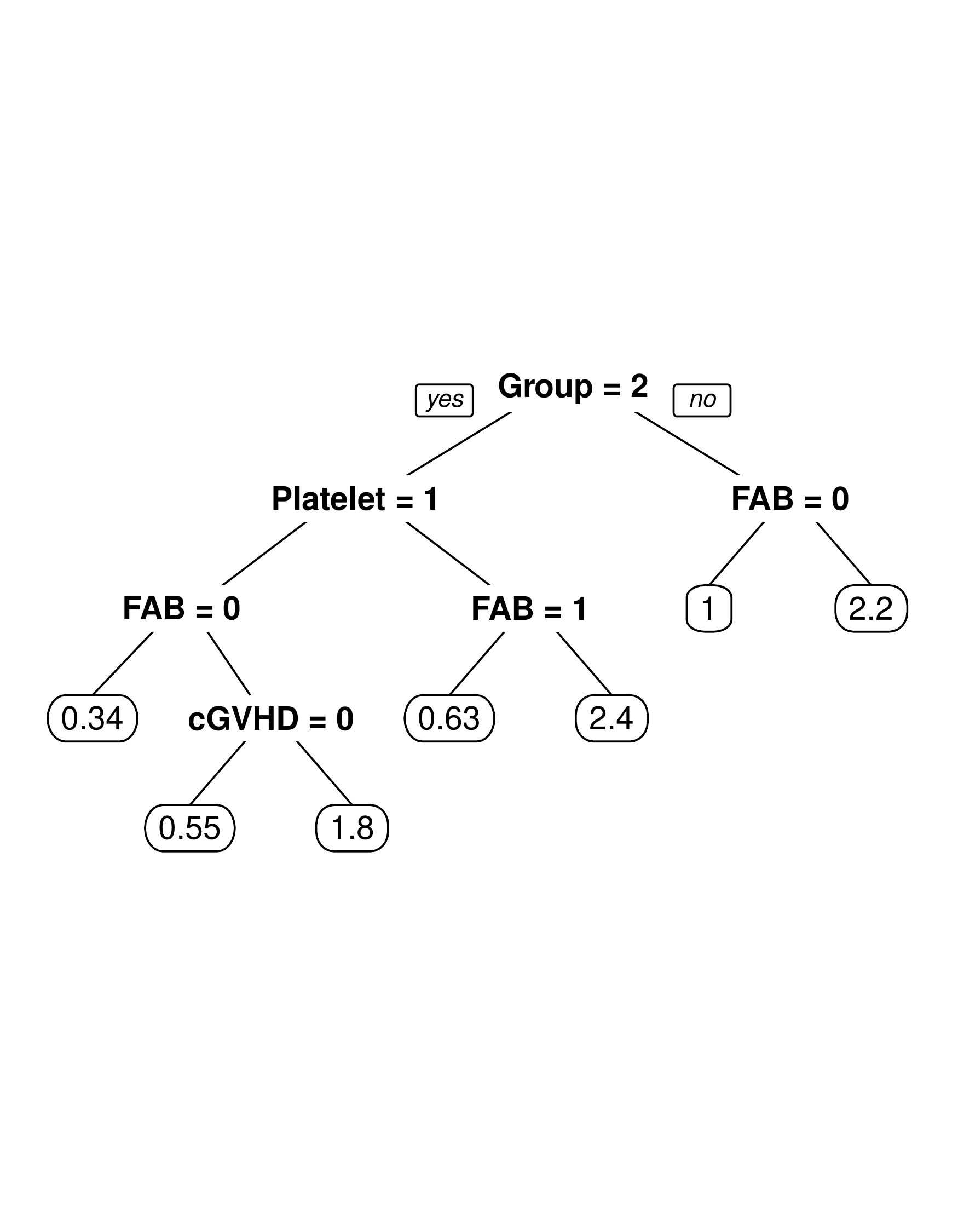}
  \end{minipage}
\end{minipage}
\caption{\label{fig:BMT}Left and right plots are LTRCIT and LTRCART trees for the BMT data, respectively.}
\end{figure}

 \begin{table}[t]
\center
\caption{\label{cox:BMT}  Cox model result of BMT data} 
\begin{tabular}{ lcrcccccrcc}
\hline
 Risk factor && $coef$ && $exp(coef)$ && $se(coef)$ && $z$ && $p$\\
  \hline
   aGVHD   && $0.20833$ && $1.232$   && $0.296$ && $0.703$ && $0.4800$\\    
   cGVHD   && $-0.16750$ && $0.846$  && $0.290$ && $-0.578$ && $0.5600$\\    
 Platelet  && $-0.95407$ && $0.385$  && $0.337$ && $-2.834$ && $0.0046$\\  
 FAB       && $0.69162$ && $1.997 $  && $0.274$ && $2.523$ && $0.0120$\\  
Group-$2$  && $-0.81616$ && $0.442$  && $0.329$ && $-2.480$ && $0.0130$\\  
Group-$3$  && $0.00531$ && $1.005$   && $0.331$ && $0.016$ && $0.9900$\\  
  \hline
\end{tabular}
\end{table}

Figure \ref{fig:BMT} shows the results of the two proposed LTRC trees applied to the BMT data, while Table \ref{cox:BMT} contains the corresponding result from the Cox proportional hazards model. Since both time-independent covariates, disease group and FAB, are identified as important risk factors by the Cox model and the two tree algorithms, one can see their main difference lies on the time-varying covariates. Among the three time-varying covariates, only Platelet is identified as an important risk factor by the Cox model. In contrast, none of the three time-varying covariates are split variables for LTRCIT, while both Platelet and cGVHD are considered as important risk factors by LTRCART. 

The two tree algorithms agree on the branch where disease group is ALL or AML High Risk, but diverge on disease group AML Low Risk. LTRCART splits on Platelet, FAB and cGVHD on that branch while LTRCIT does not split at all. Table \ref{cox:BMT2} shows the Cox model result on that branch. One can see it also picks Platelet and FAB as predictive risk factors. Tests of the proportional hazards assumption based on the Schoenfeld residuals shows that this assumption is reasonable for these data. From these results, it is clear that compared to LTRCIT, LTRCART gives results that are more similar to those of the Cox model, which should not be a surprise given that both LTRCART and the Cox model are based on the assumption of proportional hazards, especially since in this case the assumption apparently holds.

  \begin{table}[t]
\center
\caption{\label{cox:BMT2}  Cox model result on subset of BMT data, with disease group-AML Low Risk} 
\begin{tabular}{ lcrcccccrcl}
\hline
 Risk factor && $coef$ && $exp(coef)$ && $se(coef)$ && $z$ && $p$\\
  \hline
   aGVHD   && $-0.0415$ && $0.9593$  && $0.555$ && $-0.0748$ && $0.9400$\\     
   cGVHD   && $0.2060$  && $1.2287$  && $0.502$ && $0.4099$  && $0.6800$\\  
 Platelet  && $-3.2893$ && $0.0373$  && $0.795$ && $-4.1384$ && $3.5\times 10^{-05}$\\  
 FAB       && $0.9191$  && $2.5070$  && $0.441$ && $2.0839$  && $3.7\times 10^{-02}$\\  
  \hline
\end{tabular}
\end{table}

\subsection{Mayo Clinic Primary Biliary Cirrhosis Data} 
This data set in the \texttt{R} package \texttt{survival} were obtained from 312 patients with primary biliary cirrhosis (PBC) enrolled in a double-blind, placebo-controlled, randomized trial conducted between January, 1974 and May, 1984 at the Mayo Clinic to evaluate the use of D-penicillamine for treating PBC. The data were collected at entry and at yearly intervals on a total of 45 variables. More detailed description can be found in \cite{dickson1989prognosis}. Follow-up was extended to April, 1988, which generated 1,945 patient visits that enable us to study the change in the prognostic variables of PBC \citep{murtaugh1994primary}.    
  
\cite{dickson1989prognosis} developed a predictive survival model based on the baseline data (time invariant data collected at entry). They used the Cox proportional hazards model, coupled with forward and backward stepwise variable selection procedures to build the model. Twelve noninvasive, easily collected variables that require only clinical evaluation and a blood sample were included in the modeling. These variables are as follows:
\begin{itemize}
	\item age: 	in years
	\item albumin:	logarithm of serum albumin (g/dl)
	\item alk.phos:	alkaline phosphotase (U/liter)
	\item ascites:	presence of ascites
	\item ast: aspartate aminotransferase(U/ml)
	\item bili:	logarithm of serum bilirubin (mg/dl)
	\item chol:	serum cholesterol (mg/dl)
	\item edema: $0$-no edema, $0.5$-untreated or successfully treated, $1$-edema despite diuretic therapy
	\item hepato: presence of hepatomegaly or enlarged liver
	\item platelet:	platelet count
	\item protime: logarithm of prothrombin time, standardized blood clotting time
	\item spiders:	presence or absence of spiders
\end{itemize}
The forward and backward stepwise selection procedures chose the same model, which contains five variables: age, edema, bili, albumin and protime. Indeed, if we run the Cox model on these $12$ variables, only those five variables have $p$-values less than $0.05$ (top panel in Table \ref{cox:PBC}). The survival tree results for these data are shown in Figure \ref{fig:PBC}. The conditional inference survival tree identifies the same five risk factors as the Cox model, while the relative risk survival tree identifies a different five risk factors: age, alk.phos, ascites, bili and protime. The main difference between the two trees is their left branches, where the conditional inference tree only splits on edema while the relative risk tree splits on age, alk.phos and protime.
  \begin{table}[p]
\center
\caption{\label{cox:PBC}  Cox model results of PBC data} 
\begin{threeparttable}
\begin{tabular}{ lrrcrc}
\hline
 & \multicolumn{5}{c}{Baseline result} \\ \cline{2-6} 
 Risk factor & \multicolumn{1}{c}{$coef$}  & $exp(coef)$ & $se(coef)$ & $z$ & $p$\\ \hline
 age    & $3.52\times 10^{-02}$ & $1.04$  & $1.02\times 10^{-02}$ & $3.47$ & $5.3\times 10^{-04}$\\     
albumin & $-2.09\times 10^{-00}$& $0.12$ & $9.06\times 10^{-01}$ & $-2.31$ & $2.1\times 10^{-02}$\\    
alk.phos& $1.12\times 10^{-05}$ & $1.00$  & $3.74\times 10^{-05}$ & $0.30$ & $7.6\times 10^{-01}$\\    
ascites & $4.61\times 10^{-01}$ & $1.59$  & $3.35\times 10^{-01}$ & $1.38$ & $1.7\times 10^{-01}$\\    
 ast	& $3.25\times 10^{-03}$ & $1.00$  & $1.89\times 10^{-03}$ & $1.73$ & $8.4\times 10^{-02}$\\  
 bili 	& $7.17\times 10^{-01}$ & $2.05$  & $1.41\times 10^{-01}$ & $5.08$ & $3.7\times 10^{-07}$\\   
 chol 	& $-6.50\times 10^{-05}$& $1.00$ & $4.54\times 10^{-04}$ & $-0.14$ & $8.9\times 10^{-01}$\\ 
 edema  & $7.44\times 10^{-01}$ & $2.10$  & $3.60\times 10^{-01}$ & $2.07$ & $3.9\times 10^{-02}$\\  
 hepato & $1.54\times 10^{-01}$ & $1.17$  & $2.34\times 10^{-01}$ & $0.66$ & $5.1\times 10^{-01}$\\   
platelet& $1.18\times 10^{-04}$ & $1.00$  & $1.13\times 10^{-03}$ & $0.10$ & $9.2\times 10^{-01}$\\   
 protime& $2.74\times 10^{-00}$ & $15.45$ & $1.15\times 10^{-00}$ & $2.37$ & $1.8\times 10^{-02}$\\   
 spiders& $1.62\times 10^{-01}$ & $1.18$  & $2.30\times 10^{-01}$ & $0.71$ & $4.8\times 10^{-01}$\\   
  \hline
 &\multicolumn{5}{c}{Entire follow-up result} \\ \cline{2-6} 
 Risk factor & \multicolumn{1}{c}{$coef$}  & $exp(coef)$ & $se(coef)$ & $z$ & $p$ \\ \hline
 age & $4.01\times 10^{-02}$ & $1.04$ & $1.5\times 10^{-02}$  &  $2.61$ & $9.1\times 10^{-03}$ \\
 albumin & $-3.24\times 10^{-00}$& $0.04$ & $9.2\times 10^{-01}$  &  $-3.53$ & $4.2\times 10^{-04}$\\
 alk.phos & $-9.10\times 10^{-05}$& $1.00$ & $1.8\times 10^{-04}$  &  $-0.50$ & $6.2\times 10^{-01}$ \\
 ascites & $3.02\times 10^{-01}$ & $1.35$ & $3.7\times 10^{-01}$  &  $0.81$ & $4.2\times 10^{-01}$ \\
 ast  & $-1.92\times 10^{-03}$& $0.99$ & $2.3\times 10^{-03}$  &  $-0.84$ & $4.0\times 10^{-01}$ \\
 bili & $8.69\times 10^{-01}$ & $2.38$ & $2.1\times 10^{-01}$  &  $4.15$ & $3.4\times 10^{-05}$ \\
 chol& $-6.59\times 10^{-04}$& $1.00$ & $1.0\times 10^{-03}$  &  $-0.64$ & $5.2\times 10^{-01}$ \\
 edema & $7.17\times 10^{-01}$ & $2.05$ & $4.8\times 10^{-01}$  &  $1.50$ & $1.3\times 10^{-01}$ \\
 hepato & $-5.22\times 10^{-01}$& $0.59$ & $3.7\times 10^{-01}$  &  $-1.42$ & $1.6\times 10^{-01}$ \\
 platelet & $2.27\times 10^{-03}$ & $1.00$ & $1.6\times 10^{-03}$  &  $1.41$ & $1.6\times 10^{-01}$ \\
 protime & $3.32\times 10^{-00}$ & $27.60$& $1.1\times 10^{-00}$  &  $2.90$ & $3.8\times 10^{-03}$ \\
 spiders & $3.71\times 10^{-01}$ & $1.45$ & $3.4\times 10^{-01}$  &  $1.09$ & $2.8\times 10^{-01}$ \\ \hline
\end{tabular}
 \begin{tablenotes}[para,flushleft]
\footnotesize{The ``baseline result'' is obtained using only baseline (time invariant) data; while the \\``follow-up result'' is obtained using all (time varying) data}.
  \end{tablenotes}
\end{threeparttable}
\end{table}

\begin{figure}[t]
\centering
\begin{minipage}{\linewidth}
  \begin{minipage}{0.45\linewidth}
    \includegraphics[width=\linewidth]{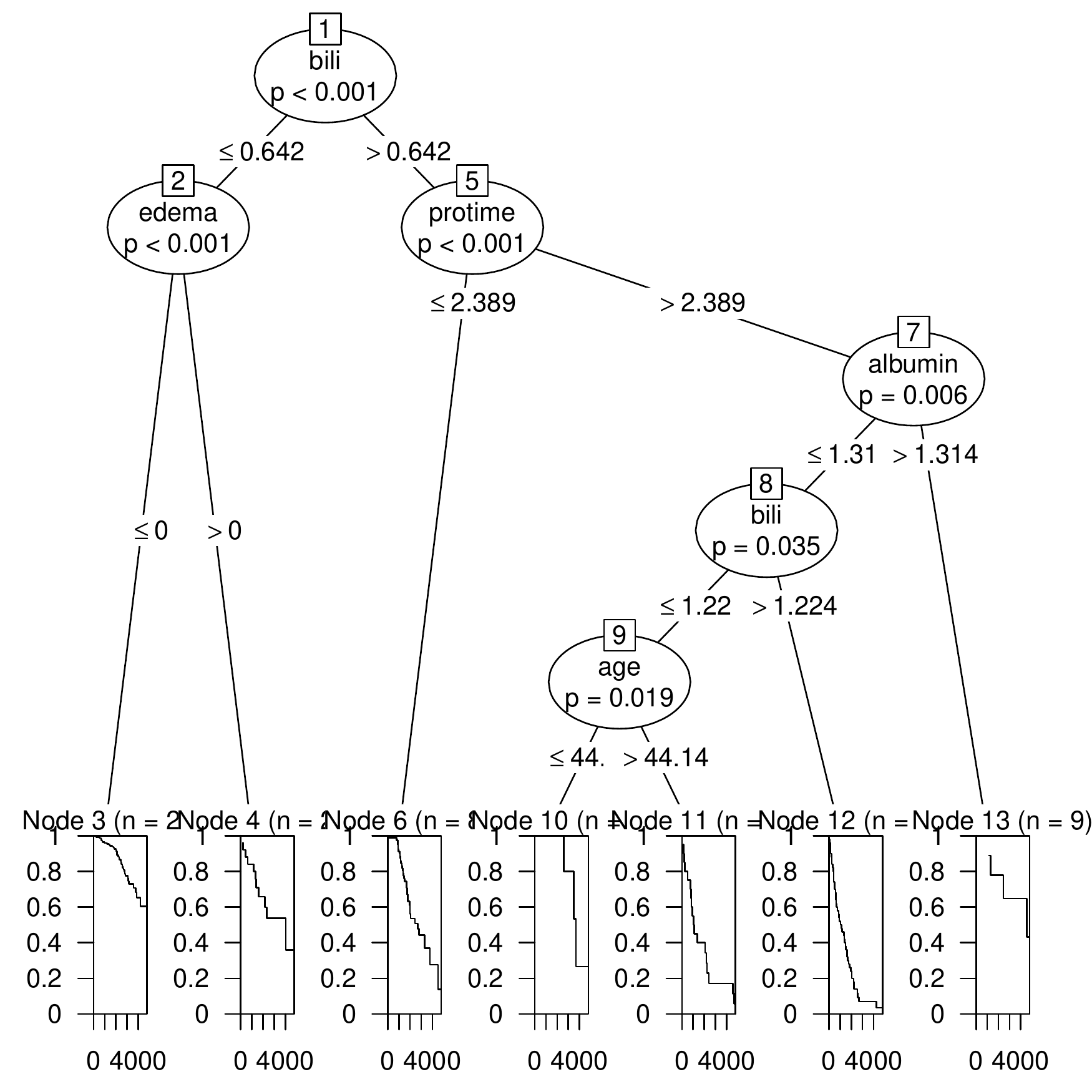}
  \end{minipage}
  \hspace{0.05in}
  \begin{minipage}{0.45\linewidth}
    \includegraphics[width=\linewidth]{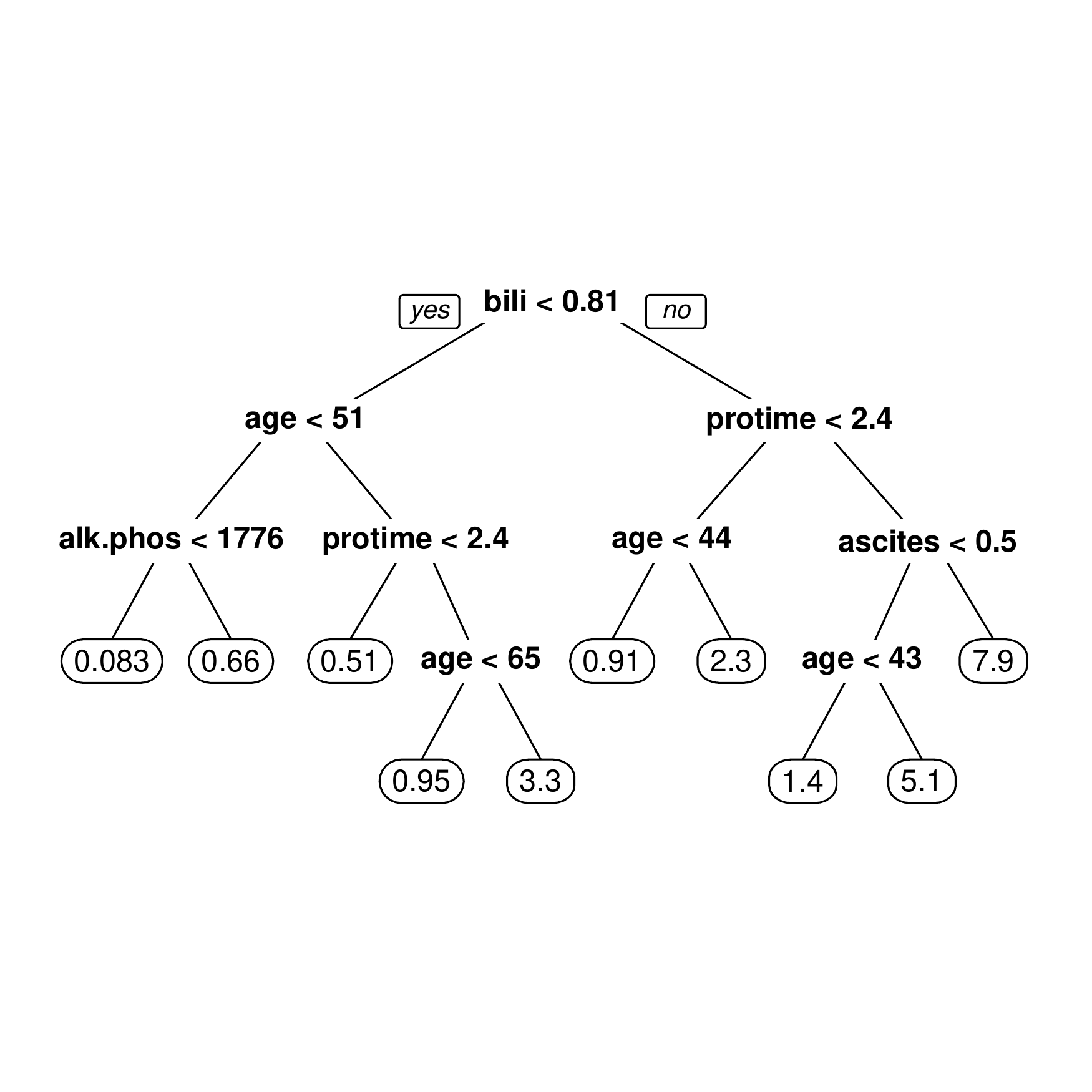}
  \end{minipage}
\end{minipage}
\caption{Left and right panels show the conditional inference survival tree and relative risk survival tree, respectively, for PBC data based on baseline (time invariant) variables}
\label{fig:PBC}
\end{figure}

However, it is interesting to see whether the results change when the entire follow-up data ($1,945$ LTRC observations) are used in the analysis. Note that all of the $12$ covariates except age become time-varying covariates in the follow-up data. The Cox model result on the entire follow-up data is shown on the lower panel in Table \ref{cox:PBC}, from which we can see that age, bili, albumin and protime are still significant at a $0.05$ level while edema is no longer significant. Among those variables that are statistically significant, the effects of albumin and prothrombin time seem larger in the time-varying analysis than in the analysis using only baseline values. The other variables remain insignificant when the entire follow-up data are used in the analysis.   

We also fit the two time-varying covariates survival trees on the follow-up data and show the results in Figure \ref{fig:PBC2}. The LTRCIT tree also identifies age, bili, albumin and protime as important risk factors, along with ascites and spiders. Note that the last two variables are not considered as important risk factors in the  corresponding time-independent version of the survival tree fitted using baseline data. Just as was true for the Cox model, edema is also dropped as an important risk factor in the time-varying covariate results.  It is striking that
the top-level split variables are different when the follow-up data are included in the analysis.   

LTRCART selects the five covariates age, edema,  bili, albumin and protime as important risk factors, while its time-independent version does not pick albumin as a risk factor. This shows that although it shares the proportional assumption with the Cox model, its result does not necessary correspond to that of the Cox model. 

In practice, it is best for researchers to analyze data using several different models/tools before making any decision or drawing any conclusion. Since every statistical model/tool has its own assumptions or applicable conditions, it is risky for decision-making to rely on the result of just one model. Rather, the common results of different models/tools usually represent a summary of the actual information in the data, and is thus more reliable. In the PBC data case, it is safe to say that age, bilirubin, albumin and prothrombin time are predictive risk factors for survival time of individuals with primary biliary cirrhosis, since they are identified by all of the models, while the importance of other potential risk factors such as edema may be decided by further analysis or domain expertise. 
 
\begin{figure}[H]
\centering
\begin{minipage}{\linewidth} \centering
    \includegraphics[width=0.8\linewidth, height=0.4\textheight]{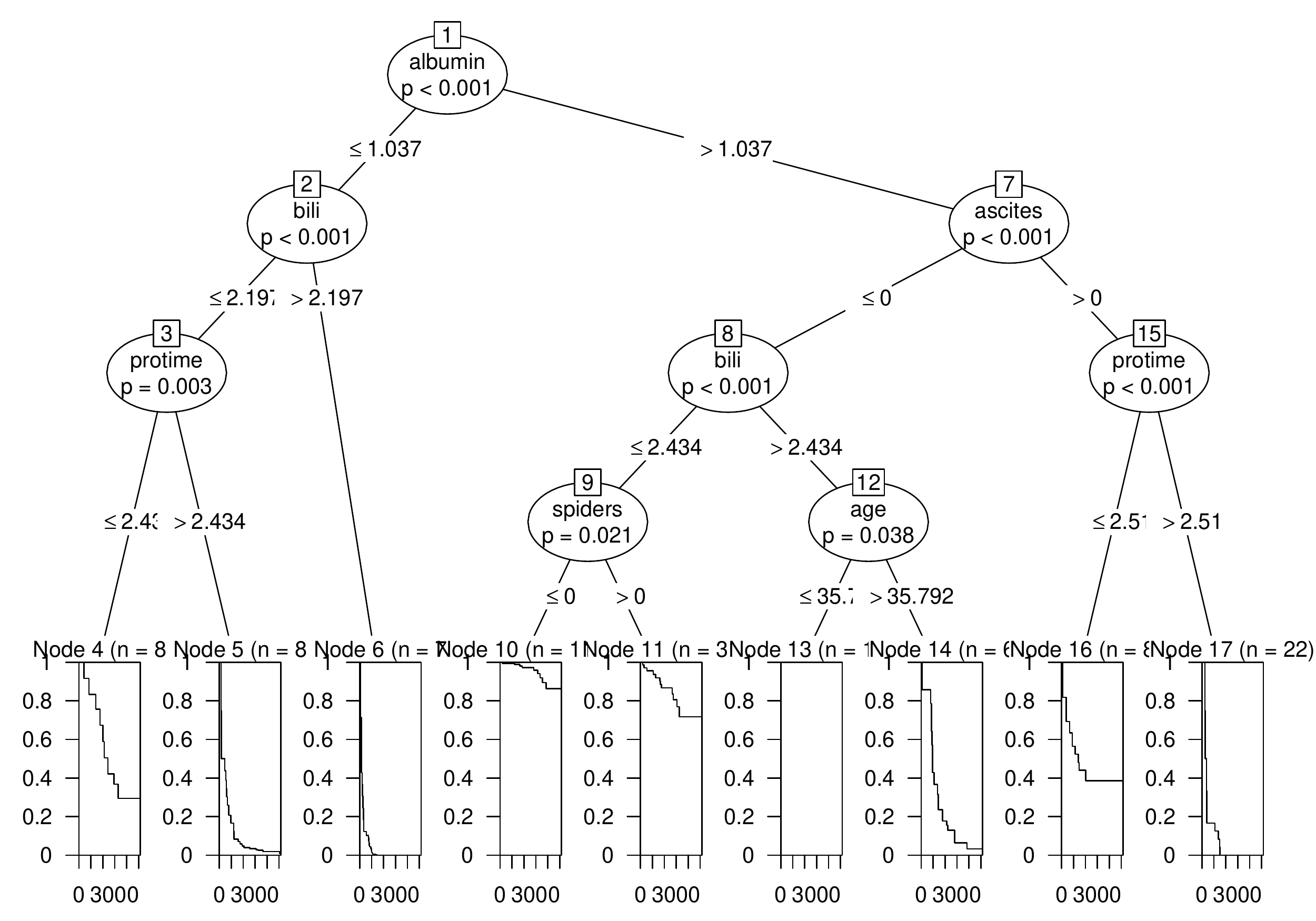}
\end{minipage}
\begin{minipage}{\linewidth} \centering
    \includegraphics[width=0.8\linewidth, height=0.4\textheight]{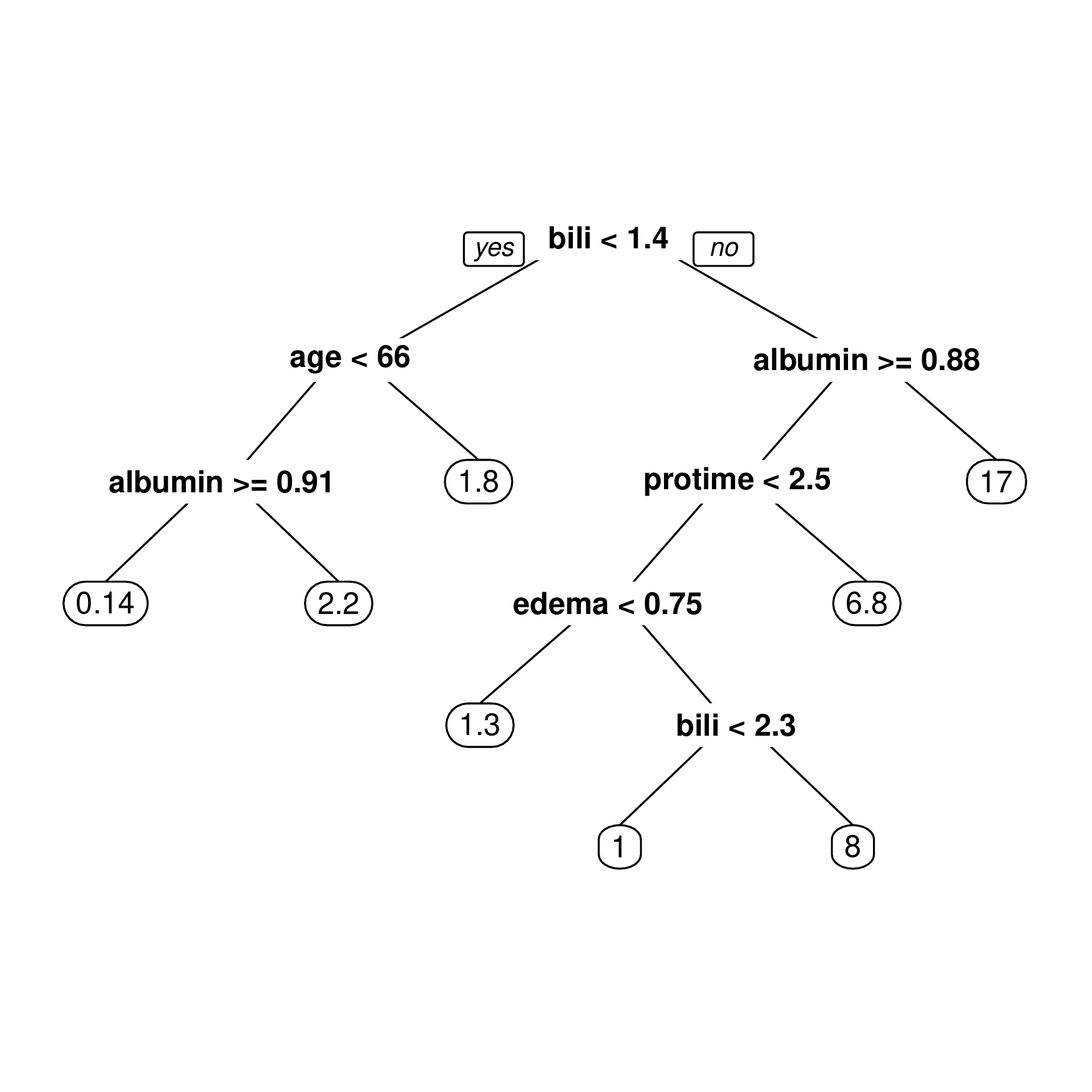}
\end{minipage}
\caption{Upper and lower panels show the LTRCIT and LTRCART trees for PBC data, respectively}
\label{fig:PBC2}
\end{figure}

\section{Conclusion}
In this paper we have proposed two left-truncation and right-censored (LTRC) tree methods. They are each an extension of an existing survival tree algorithm. Simulations are used to explore the properties of the proposed LTRC trees, including the unbiasedness of the tree algorithms, the trees' ability to recover the correct tree structure, and their prediction performance. Results show that with a reasonably large sample size, both LTRC trees perform well in terms of recovering true underlying structure of data and their prediction performance compares well with the Cox proportional hazards model. Both trees are applied to a real data example and the results indicate that trees provide a good alternative to the Cox model and have several advantages over this (semi-)parametric model.

We also showed that the proposed LTRC trees can be used to fit time-varying covariate survival trees, and showed that transforming subjects with time-varying covariates into pseudo-subjects that are LTRC data with time-independent covariates is theoretically justifiable in tree construction. The time-varying covariate survival trees' ability to recover the correct tree structure and their prediction power are demonstrated through simulation. They are also applied to two real data examples where the covariates are time-varying. 

An \texttt{R} script for implementing LTRCIT and LTRCART is available at \url{http://people.stern.nyu.edu/jsimonof/survivaltree}. A corresponding \texttt{R} package \texttt{LTRCtrees} has been submitted to CRAN. 

\newpage
\bibliographystyle{apalike}
\bibliography{myreferences}
\end{document}